\newcommand{\wang}[1]{\ifthenelse{\boolean{showcomments}}
	{ \textcolor[rgb]{1,0,1}{(ZW:  #1)}}{}}
\newcommand{\fliu}[1]{\ifthenelse{\boolean{showcomments}}
	{ \textcolor{blue}{(FL:  #1)}}{}}
\newcommand{\ychen}[1]{\ifthenelse{\boolean{showcomments}}
	{ \textcolor{green}{(ZP:  #1)}}{}}
\newcommand{\slow}[1]{\ifthenelse{\boolean{showcomments}}
	{ \textcolor{blue}{(SL:  #1)}}{}}
\theoremstyle{definition}
\newtheorem{theorem}{Theorem}
\theoremstyle{definition}
\newtheorem{remark}{Remark}
\title{Optimal Emergency Frequency Control \\Based on Coordinated Droop in Multi-Infeed \\Hybrid AC-DC System}
\begin{document}
\setstretch{1}
\author{
	Ye~Liu,~\IEEEmembership{Student Member, IEEE},
	and Chen~Shen,~\IEEEmembership{Senior Member, IEEE} 
        \thanks{This work was supported  by ...  (Corresponding author: Chen Shen)   }     

		\thanks{Y. Liu and C. Shen are with China State Key Lab. of Power System, Dept. of Electrical Engineering, Tsinghua University, Beijing,
				China, 100084. e-mail: ().}
}

\markboth{Journal of \LaTeX\ Class Files,~Vol.~xx, No.~xx, xx~xxxx}%
{Shell \MakeLowercase{\textit{et al.}}: Bare Demo of IEEEtran.cls for IEEE Journals}

\maketitle

\begin{abstract}
    In multi-infeed hybrid AC-DC (MIDC) systems, the asynchronous interconnection between regional grids, the complicated system dynamics and possible cascading failures have an enormous effect on the frequency stability. In order to deal with the frequency instability problems in emergency situations, this paper proposes a decentralized emergency frequency control strategy based on coordinated droop for the MIDC system. First, a P-f droop control for LCC-HVDC systems is introduced and the coordinated droop mechanism among LCC-HVDC systems and generators is designed. Then, to reasonably allocate the power imbalance among LCC-HVDC systems and generators, an optimal emergency frequency control (OEFC) problem is formulated, and the optimal droop coefficients are selected in a decentralized approach, which can deal with various control objectives. A Lyapunov stability analysis shows that the closed-loop equilibrium is locally asymptotically stable considering the LCC-HVDC dynamics. The effectiveness of the proposed emergency control strategy is verified through simulations. 
\end{abstract}

\begin{IEEEkeywords}
	Multi-infeed hybrid AC-DC system, optimal emergency frequency control, coordinated droop mechanism, LCC-HVDC system.
\end{IEEEkeywords}


\section{Introduction}

\subsection{Motivation and Approach}

With the continuous development of HVDC transmission technologies, conventional AC power grids have been transformed into large-scale complex hybrid AC-DC grids \cite{wang2013harmonizing}, \cite{bilodeau2016making}. In China, the implementation of enormous line-commutated-converter-based HVDC (LCC-HVDC) systems leads to asynchronous interconnection between regional power grids \cite{zhou2018principle}, and forms the multi-infeed hybrid AC-DC (MIDC) systems. In an MIDC system, multiple HVDC systems are connected with one AC system. Thus, the complicated dynamics and various faults of the MIDC system pose a serious threat to system stable operation. 

Frequency stability is of great importance for system operation. However, in MIDC systems, the conventional frequency control strategies might be difficult to ensure the frequency stability, and the reasons are: 1) the DC block faults or AC-DC cascading faults are prone to occur in MIDC systems, which could cause considerable active power imbalance. 2) Due to the feeding of multiple HVDC systems and asynchronous connections among AC systems, the system inertia and frequency regulation reserve might be not enough to meet the frequency stability requirements \cite{li2008frequency}, \cite{bevrani2009robust}. Therefore, the MIDC systems require emergency frequency control strategy in order to deal with the considerable power imbalance.

The traditional approaches of emergency frequency control are generator tripping or load shedding operations \cite{rudez2009analysis}, \cite{karady2002hybrid}, \cite{rudez2010monitoring}, but these operations will cause severe economic losses. In MIDC systems, by utilizing the fast adjustability of the HVDC system transmission power \cite{harnefors2016impact}, the more effective emergency frequency control strategy could be designed to improve the system frequency stability. In this paper, to design a decentralized approach for emergency frequency control with LCC-HVDC systems participating in, a coordinated-droop-based emergency frequency control strategy is proposed. Besides, the droop coefficients are optimized for more reasonable allocation of power imbalance.

\subsection{Literature Review}

Considering the emergency frequency control for hybrid AC-DC systems, the emergency DC power support (EDCPS) strategy is one of the effective approaches, and there are many related studies in recent years. In \cite{liu2017design}, an adaptive dynamic surface control based EDCPS strategy is proposed. In \cite{du2012integrated}, an emergency frequency control strategy considering LCC-HVDC and centre of inertia (COI) is proposed, and the adaptive backstepping sliding-mode control is utilized to guarantee its robustness. However, the above two methods are designed for synchronous AC-DC parallel interconnected systems, thus not applicable to the asynchronous MIDC systems. Towards the asynchronous MIDC system, a response-based AC-DC coordinated control strategy is proposed in \cite{shiyun2015coordinated}, which combines the EDCPS strategy and loading shedding operations. Nevertheless, the strategy in \cite{shiyun2015coordinated} is centralized with control centers, thus this strategy could not ensure a rapid response in the case of communication delay or communication failure. Obviously, the decentralized control strategies are more advantageous in emergency situations. In \cite{wenqiang2016study}, a decentralized control strategy is proposed to improve the AC frequency stability. In \cite{ma2006additional}, the frequency limit control (FLC) is proposed for hybrid AC-DC systems, which is also decentralized. However, both the strategies in \cite{wenqiang2016study} and \cite{ma2006additional} are based on the PID-type control, thus the selection and optimization for the PID parameters might be difficult in engineering practice. Moreover, all the above emergency frequency control strategies for MIDC systems do not consider the coordination with the generators' frequency regulation.

Considering the LCC-HVDC droop control proposed in this paper, droop control is a typical decentralized control strategy which is based on the relevance between two variables. To the best of our knowledge, there is no relevant research about applying droop control to LCC-HVDC systems, but droop control has been widely applied to deal with the power allocation problem in VSC-MTDC systems or microgrid systems, and there are many relevant studies on the design or optimization of droop control for this paper's reference. In \cite{abdelwahed2016power}, an optimal-power-flow based supervisory controller is designed to select the optimal droop reference voltages for VSC-MTDC systems. In \cite{li2017coordinated}, a model predictive control (MPC) based grid controller is proposed to coordinately adjust the droop gains in MTDC grids. The above two methods are centralized optimization approaches which gather the system parameters, optimize the droop coefficients and then update them to local controllers. Due to the time delay and strong communication dependence, this kind of methods cannot be applied to the control parameters optimization in emergency situations. In the field of decentralized design of droop control, the adaptive droop control (ADC) is widely utilized to design various droop characteristics under different control scenarios, e.g. , \cite{wang2017adaptive} for V-I-f droop characteristic, \cite{mortezapour2018adaptive} for interlinking converter design, and \cite{abdel2015droop} for three-wire bipolar HVDC transmission. Nevertheless, the power allocation problems of HVDC systems are barely considered in ADC strategies to optimize the droop coefficients. Besides, existing works for optimal droop control have single control objective, but relatively general optimal design methods are required due to various control objectives in various operation scenarios in engineering practice. 

\subsection{Contribution}

According to the literature review, to deal with the emergency frequency instability problems in MIDC system, the major challenge is how to design a effective decentralized emergency frequency control strategy with the LCC-HVDC participating in, and then how to reasonably allocate the power imbalance by selecting control parameters. In this paper, a simple but effective approach, i.e., the coordinated droop control is utilized to design the emergency frequency control. Besides, the power allocation problems with various allocation objectives are formulated as optimization problems, and the corresponding optimal control coefficients are selected. Benefit from decentralized design logic, the proposed control strategy can be easily applied to practical projects.

The contributions of this paper are as follows:

\begin{itemize}
	\item A coordinated-droop-based emergency frequency control strategy is proposed for the MIDC system. A P-f droop characteristic for LCC-HVDC system is introduced, and the coordinated droop mechanism among LCC-HVDC systems and generators is designed.
	\item To reasonably allocate the power imbalance among LCC-HVDC systems and generators, the optimal emergency frequency control (OEFC) problem with various control objectives is formulated. Then, the optimal droop coefficients for OEFC are given, and the optimality is proved. 
	\item Considering the LCC-HVDC dynamics, the equilibrium of the closed-loop system is analyzed and the asymptotic stability is proved through Lyapunov approach.
\end{itemize}

\subsection{Organization}
The rest of this paper is organized as follows. Section II proposes the coordinated-droop-based emergency frequency control strategy in MIDC system. Section III introduces the state model of the MIDC system. Section IV introduces the optimal droop design for power allocation. Section V discusses the stability of the closed-system equilibrium. In Section VI, a MIDC system case is tested and the effectiveness of the proposed control strategy is verified. Section VII provides the conclusion.

\section{Coordinated-Droop-Based Emergency Frequency Control Strategy}
In this section, the P-f droop control for LCC-HVDC systems is proposed. Then, the coordinated droop mechanism in MIDC system is introduced, which enables the emergency frequency control strategy.

Generally, one MIDC system can be represented as the topology shown in Fig. \ref{midc_topo}, where the AC main system contains $n_G$ synchronous generators and is connected with $n_D$ LCC-HVDC systems. There are $m$ LCC-HVDCs transmitting power from sending-end (SE) systems to the AC main system and $(n_D-m)$ LCC-HVDCs from the AC main system to receiving-end (RE) systems, which are called SE-LCC and RE-LCC systems respectively in this paper. The SE and RE systems can be collectively called the \textit{adjacent AC systems}. Based on the described MIDC system, we design the emergency frequency control strategy.

\begin{figure}[htb]
	\centering
	\includegraphics[width=0.48\textwidth]{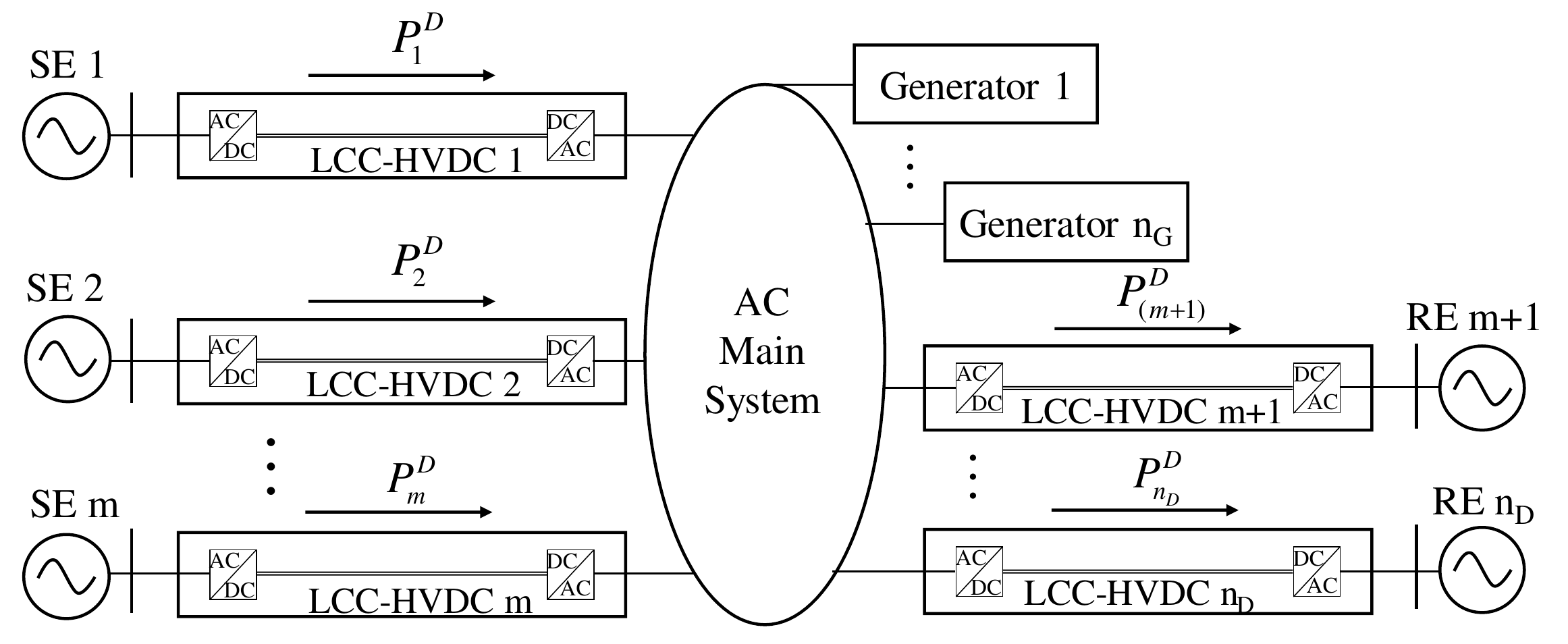}
	\caption{Topology of the MIDC system}
	\label{midc_topo}
\end{figure}

\subsection{P-f Droop Control Design for LCC-HVDC system}
The LCC-HVDC system is widely applied in the power grids due to its considerable transmission capacity and long transmission distance. The operation state of LCC-HVDC system can be determined by the static operation characteristic \cite{arrillaga1998high}, \cite{kundur1994power}, as shown in Fig. \ref{lcc_static_fig}.

\begin{figure}[htb]
	\centering
	\includegraphics[width=0.28\textwidth]{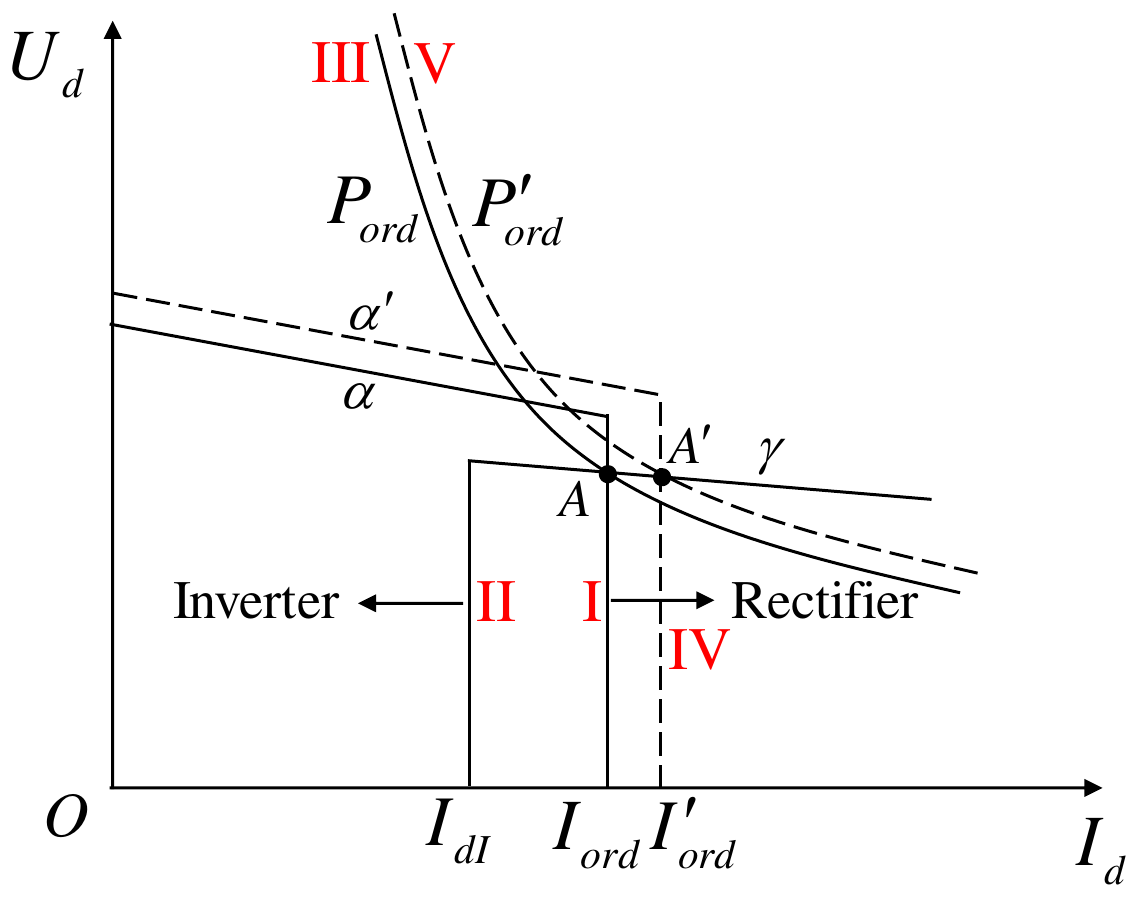}
	\caption{The static operation characteristic of LCC-HVDC}
	\label{lcc_static_fig}
\end{figure}

The normal operation curves of the LCC-HVDC are shown as I and II in Fig. \ref{lcc_static_fig}, where the vertical parts of the curves represent constant current control and the upper parts are constant $\alpha$ control and constant $\gamma$ control respectively. The intersection point $A$ indicates that the rectifier of the LCC-HVDC adopts constant current control while the inverter adopts constant $\gamma$ control during the normal operation. Besides, the constant power control can be added to the rectifier to regulate the transmission power directly, which is shown as curve III in Fig. \ref{lcc_static_fig}. Curve III also passes the point $A$. We have:
\begin{align}
I_{ord}=\frac{P_{ord}}{U_{d}}
\end{align}
where $I_{ord}$ and $P_{ord}$ are the current order and power order for the rectifier control, $U_d$ is the DC voltage.

In this section, we design the droop characteristic between the frequency of the AC system and the active power of the LCC-HVDC, i.e., the P-f droop. Suppose that the $P_{ord}$ increases to $P_{ord}^{'}$ due to frequency fluctuation, the $I_{ord}$ will also increase to $I_{ord}^{'}$ and the firing angle $\alpha$ will decrease to $\alpha^{'}$ as shown in Fig. \ref{lcc_static_fig}. The above changes make the operation point $A$ change to $A^{'}$, curve I change to IV, and curve III change to V. In addition, the DC voltage will drop slightly due to the constant $\gamma$ control of the inverter. Note that the firing angle $\alpha$ has limits to ensure the normal operation, thus the $P_{ord}$ also has limits, i.e., $P_{ord} \in \Omega_P$, which should be considered in the design of LCC-HVDC droop.

Generally, in the MIDC system as shown in Fig. \ref{midc_topo}, one LCC-HVDC system can participate in the frequency control in both the SE and RE system. Therefore, the P-f droop control equations are shown as follows:
\begin{subequations}
	\label{lcc_droop_equ}
	\begin{align}
	&P_{ord}=P_{dN} - k_{droop}^{re}(\omega_{re}-\omega_{reN}),\ \text{for RE}
	\label{lcc_droop_equ_1}\\
	&P_{ord}=P_{dN} + k_{droop}^{se}(\omega_{se}-\omega_{seN}),\ \text{for SE}\label{lcc_droop_equ_2}\\
	&(P_{ord} \in \Omega_P\nonumber)
	\end{align}
\end{subequations}
where $k_{droop}^{re}$ and $k_{droop}^{se}$ are the selected droop coefficients for RE and SE respectively, $\omega_{re}$ and $\omega_{se}$ are the AC frequencies, and the subscript $N$ represents the nominal values. Note that in the presence of the private communication network specifically built for control and protection in hybrid AC-DC power grid, the frequency signal of the RE system can be rapidly transmitted to the bipolar power control station located in the SE side, which enables the P-f droop control for the RE system. Considering the droop control for both SE and RE systems, the control framework for LCC-HVDC system is shown in Fig. \ref{lcc_droop_frame}.

\begin{figure}[htb]
	\centering
	\includegraphics[width=0.48\textwidth]{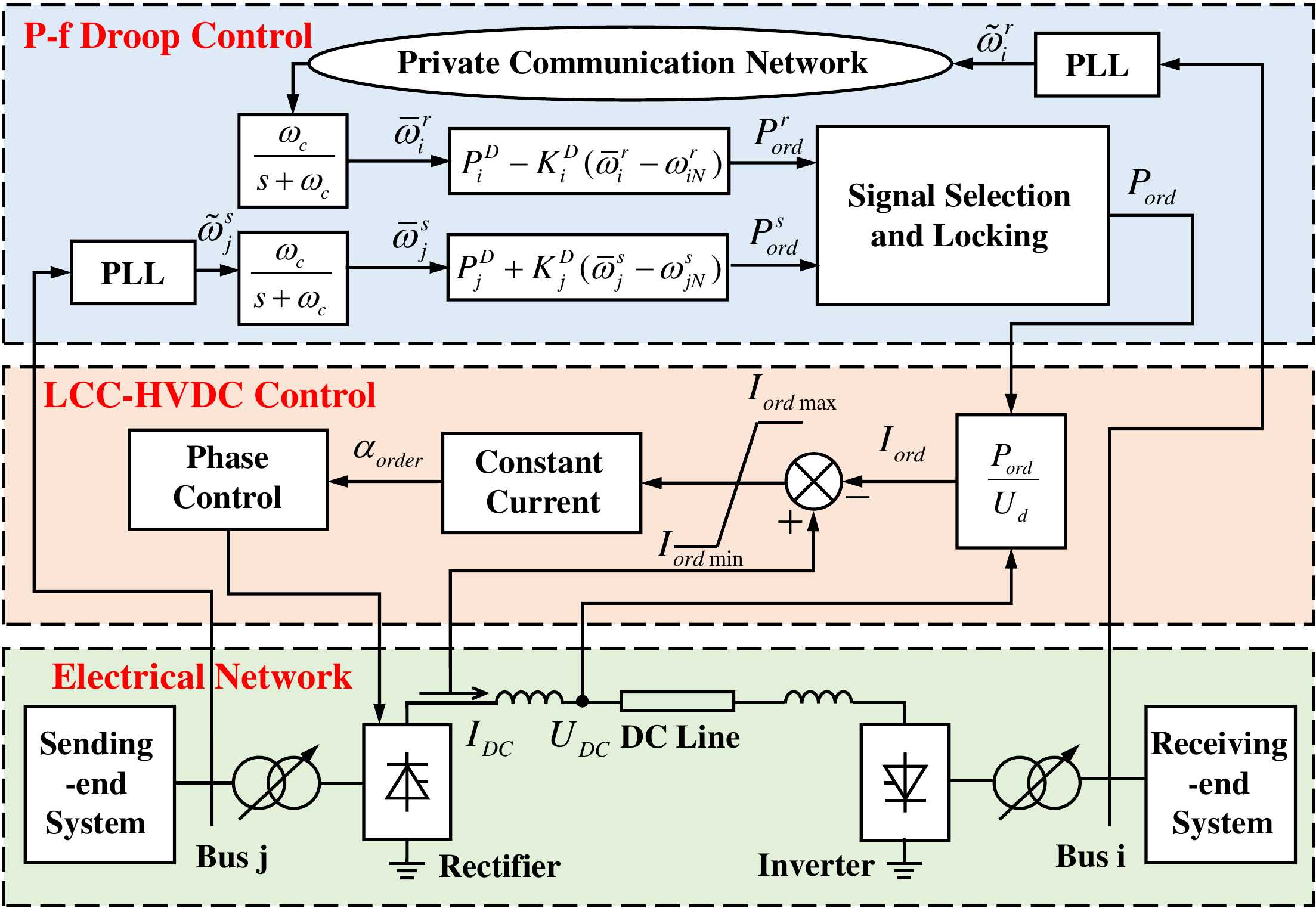}
	\caption{The control framework for LCC-HVDC}
	\label{lcc_droop_frame}
\end{figure}

In Fig. \ref{lcc_droop_frame}, if there is frequency drop in the RE system due to some faults, the DC power will increase according to the RE droop. Then, this DC power regulation will also result in the frequency loss at SE system, which will enable the SE droop and affect the frequency control in the RE system. Thus, the SE droop and RE droop cannot work at the same time. To solve the aforementioned problem, a signal selection and locking module is adopted as shown in Fig. \ref{lcc_droop_frame}, which will output the earlier-responding power regulation signal ($p_{ord}^r$ or $p_{ord}^s$) and lock the other signal. Then, the selected $P_{ord}$ signal is transferred to the HVDC control, and the phase control signal will be output to the electrical system.

\begin{remark}
	During normal operation, the LCC-HVDC systems are supposed to transmit specific powers as planned. Thus, the LCC-HVDC droop control should not work when the system operates at normal state. Besides, the coordinated droop mechanism in Section II.B avoids the mal-operating of LCC-HVDC systems. In this control framework, the private communication network plays an important role which passes the frequency signal of the RE system to the SE side.
\end{remark}

\subsection{Coordinated Droop Mechanism for Emergency Frequency Control}

In power grids, the synchronous generators are usually implemented with droop control, which is realized mainly by the governors and participates in the primary frequency control. However, the designed LCC-HVDC droop control do not participate in the conventional primary frequency control since the DC power should be kept constant without emergencies. Thus, the LCC-HVDC droop control and the generators' primary droop are relatively independent of each other, and a coordinated droop mechanism is proposed to make the LCC-HVDC droop as support for primary frequency modulation in emergency situations, which further formulates the emergency frequency control strategy. 

The proposed coordinated droop mechanism and the emergency frequency control strategy for the MIDC system are shown in Fig. \ref{coor_efc}, and this mechanism mainly contains the following two points: 

\begin{figure}[htb]
	\centering
	\includegraphics[width=0.46\textwidth]{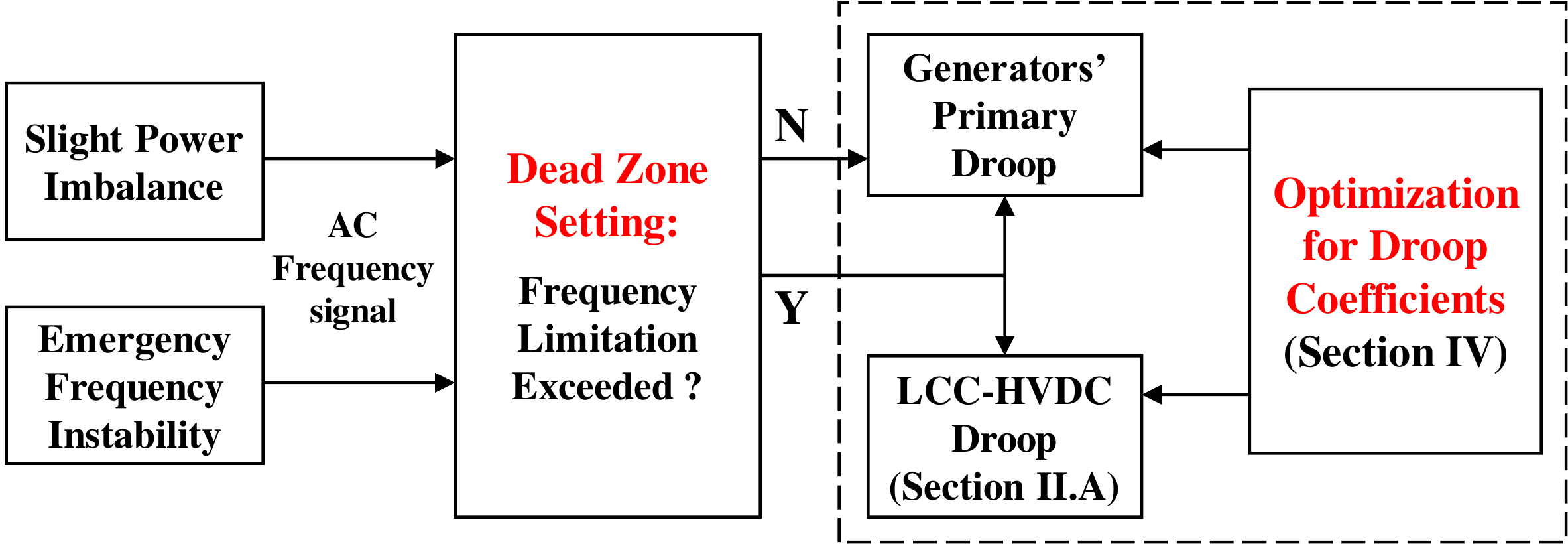}
	\caption{Coordinated-droop-based emergency frequency control strategy}
	\label{coor_efc}
\end{figure}

\textit{1) Dead Zone Setting}. Since the generators' primary droop always works even with tiny frequency fluctuations while the LCC-HVDC droop works only in case of emergency frequency problems, a dead zone setting for LCC-HVDC droop control is necessary. There are two common methods to set the dead zone, i.e., the frequency deviation limitation and the frequency change rate limitation, and in this paper we adopt the former. When the system frequency changes due to some faults, the frequency limitation of dead zone is utilized to determine whether there is an emergency and whether to enable the LCC-HVDC droop.

\textit{2) Optimization for Droop Coefficients}. The coordinated optimization for the droop coefficients will be introduced in detail in Section IV, where the optimal emergency frequency control (OEFC) problem is formulated to reasonably allocate power imbalance among LCC-HVDC systems and generators during the control process. The optimal droop coefficients stay constant during the operation and are updated only if the control objective of the OEFC problem changes.

\begin{remark}
	The designed LCC-HVDC droop control is decentralized because the local frequency signal is fed back to the controller, and the optimization of droop coefficients is also carried out in a decentralized approach. Thus, the coordinated-droop-based emergency frequency control strategy is decentralized.  
\end{remark}

\section{State Model of MIDC system}
In this section, the state model of the MIDC system shown in Fig. \ref{midc_topo} is introduced for optimal control design and stability analysis. We ignore the dynamics of the adjacent AC systems (defined in  Section II) in order to focus on the AC main system and its connected LCC-HVDC systems. We also ignore the dead zone setting of LCC-HVDC droop in state modeling due to its nonlinearity, and the reasonability will be verified in the case study. 

Generally, a power system can be described by a graph $\mathcal{G}=(\mathcal{N},\mathcal{E})$, where the nodes represent the buses denoted by $\mathcal{N}$ and the edges represent the transmission lines denoted by $\mathcal{E} \subseteq \mathcal{N} \times \mathcal{N}$. In this paper, the system contains three types of buses, i.e., the generator buses, the LCC-HVDC connected buses and the passive load buses, which are denoted by $\mathcal{N}_G$, $\mathcal{N}_D$ and $\mathcal{N}_P$ respectively. Thus we have $\mathcal{N}=\mathcal{N}_G \cup \mathcal{N}_D \cup \mathcal{N}_P$. Denote the numbers of buses in above sets by $n$, $n_G$, $n_D$ and $n_P$, then we have $n= n_G + n_D + n_P$. We ignore the load shedding operations since the LCC-HVDCs and generators can provide enough power support. We have the following assumptions:
\begin{itemize}
	\item One LCC-HVDC connected bus cannot be a generator bus, i.e., $\mathcal{N}_G \cap \mathcal{N}_D = \emptyset$.
	\item One bus in $\mathcal{N}_D$ is only connected with one LCC-HVDC line.
\end{itemize}

With the above assumptions, considering the coordinated-droop-based emergency frequency control strategy among LCC-HVDCs and generators, the MIDC system in Fig. \ref{midc_topo} can be described by the differential algebraic equations (DAEs) shown in \eqref{system_model_daes}, where we consider the second-order dynamic models of generators and the first-order inertia models of LCC-HVDC systems.
\begin{subequations}
	\label{system_model_daes}
	\begin{align}
	&\dot{\theta}_i=\omega_i,\ i \in \mathcal{N}_G \cup \mathcal{N}_D
	\label{system_model_daes_2}\\
	&M_i \dot{\omega}_i +D_i \omega_i = P_i - \sum_{j\in \mathcal{N}} B_{ij} \sin(\theta_i - \theta_j) - \overline{k}_i^G \omega_i,\ i \in \mathcal{N}_G
	\label{system_model_daes_1}\\
	&0=P_i + p_i^{dc} - \sum_{j\in \mathcal{N}} B_{ij} \sin(\theta_i - \theta_j),\ i \in \mathcal{N}_D
	\label{system_model_daes_3}\\
	&0=P_i - \sum_{j\in \mathcal{N}} B_{ij} \sin(\theta_i - \theta_j),\ i \in \mathcal{N}_P\label{system_model_daes_4}\\
	&T_i^D \dot{p}_i^{dc} = -p_i^{dc}+P_i^D-k_i^D \omega_i,\ i \in \mathcal{N}_D \label{system_model_daes_5}
	\end{align}
\end{subequations}
where $\theta_i$ is the phase angle at bus $i$ with reference to synchronous rotation coordinate, $\omega_i$ is the frequency deviation from the nominal frequency, $M_i>0$ is the inertia constant of the generator $i$, $D_i>0$ is the damping coefficient, $P_i$ is the power injection ($>0$) or demand ($<0$), $p_i^{dc}$ is the transmission power of LCC-HVDC $i$ which $>0$ when the AC main system is RE or $<0$ when this system is SE, $P_i^D$ is the nominal value of $p_i^{dc}$, $T_i^D$ is the inertia time constant of LCC-HVDC $i$, $B_{ij}=\overline{B}_{ij} V_i V_j$ is the effective susceptance of line $(i,j)$, $V_i$ is the voltage amplitude at bus $i$ which is assumed to be constant due to its irrelevance with the frequency control, $\overline{k}_i^G > 0$ is the droop coefficient of the generator $i$, and $k_i^D > 0$ is the droop coefficient of the LCC-HVDC $i$. Note that \eqref{system_model_daes_5} can represent the droop control equations both for RE and SE since we have defined the sign of $p_i^{dc}$, thus $k_i^D$ is the combination of $k_{droop}^{re}$ and $k_{droop}^{se}$ in \eqref{lcc_droop_equ}.

Define the effective droop coefficient $k_i^G = \overline{k}_i^G + D_i$ for generator $i$, then \eqref{system_model_daes_1} can be represented as:
\begin{align}
\label{system_model_daes_1_new}
M_i \dot{\omega}_i = P_i - \sum_{j\in \mathcal{N}} B_{ij} \sin(\theta_i - \theta_j) - k_i^G \omega_i,\ i \in \mathcal{N}_G
\end{align}

When designing the optimal emergency frequency control in Section IV, we ignore the dynamics of LCC-HVDC system because this dynamics have no effect on the optimal steady state. In this situation, let $\dot{p}_i^{dc}=0$ and combining \eqref{system_model_daes_3} and \eqref{system_model_daes_5}, we have:
\begin{align}
\label{system_model_eq_1}
0=P_i + P_i^{D} - \sum_{j\in \mathcal{N}} B_{ij} \sin(\theta_i - \theta_j)-k_i^D \omega_i,\ i \in \mathcal{N}_D
\end{align}

\section{Optimal Droop Design for Power Imbalance Allocation}
In this section, the power imbalance allocation problem is formulated as the optimal emergency frequency control problem with various control objectives. Then, the optimal droop coefficients for LCC-HVDCs and generators are selected and the optimality is proved.

\subsection{Optimal Emergency Frequency Control Problem}

In this MIDC system, there are various AC-DC faults which could cause the power imbalance. The power imbalance can be represented as $\sum_{i\in \mathcal{N}} P_i +\sum_{i\in \mathcal{N}_D} P_i^D \not=0$. The optimal emergency frequency control (OEFC) problem is formulated to reasonably allocate the power imbalance during the control process. In this section, the various control cost functions are defined according to various control objectives in engineering practice. Then, the reasonable power imbalance allocation can be achieved by minimizing the total control cost. Define the power regulations by the droop control strategy as $u_i^G = -k_i^G \omega_i$ for generator $i$ and $u_i^D = -k_i^D \omega_i$ for LCC-HVDC $i$. Then, we can derive the general method to formulate the OEFC problem by selecting two different control objectives. The dead zone has no effect on the system's steady-state solutions and thus we ignore it during the optimal design. 

\textit{1) Control objective I}: The LCC-HVDC with larger regulation margin provides more power support.
For generator $i$, we define the cost function $C_i^G (u_i^G)$ in a classic form \cite{wood2013power}:
\begin{align}
	\label{gene_cost}
	C_i^G (u_i^G) = \frac{1}{2} \beta_i (u_i^G)^2
\end{align}
where $\beta_i$ is the cost coefficient for generator $i$. For LCC-HVDC systems, due to the differences among the upper bounds, lower bounds and nominal values of the transmission power of LCC-HVDC systems, we define the power regulation margin of LCC-HVDC $i$ as:
\begin{align}
Z_{i}^{D}=\left\{\begin{array}{l}{\overline{P}_{i}^{D}-P_{i}^{D}},\ \text{when power increases} \\ {P_{i}^{D}-\underline{P}_{i}^{D}},\ \text{when power decreases}
\end{array}\right.
\end{align}
Then, to let the LCC-HVDC system with larger power regulation margin provide more power support for the power imbalance of the whole system, we define the cost function of LCC-HVDC $i$ as:
\begin{align}
C_i^D (u_i^D)=\alpha_i (\frac{u_{i}^{D}}{Z_{i}^{D}})^{2}=\frac{\alpha_i}{(Z_{i}^{D})^{2}}(u_{i}^{D})^{2}
\end{align}
where $\alpha_i$ is the cost coefficient for LCC-HVDC $i$. Under control objective I, the total control cost to minimize is:
\begin{align}
\label{total_cost_1}
\sum_{i\in \mathcal{N}_G} \frac{1}{2} \beta_i (u_i^G)^2 + \sum_{i\in \mathcal{N}_D} \frac{\alpha_i}{(Z_{i}^{D})^{2}}(u_{i}^{D})^{2}
\end{align}

\textit{2) Control objective II}: The adjacent AC systems have equal frequency deviations during the emergency frequency control. The generators also adopt the cost function as \eqref{gene_cost}. To have equal frequency deviations, the adjacent AC system with larger primary frequency modulation coefficient should provide more power support, i.e., its connected HVDC system should provide more power support. Thus, we define the cost function of LCC-HVDC $i$ as:
\begin{align}
C_i^D(u_i^D)=e_i (\Delta \omega_i^{'})^2=e_{i}(\frac{u_{i}^{D}}{K_{i}^{f}})^{2}=\frac{e_{i}}{(K_{i}^{f})^{2}}(u_{i}^{D})^{2}
\end{align}
where $\Delta \omega_i^{'}=\omega_i^{'}-\omega_N^{'}$ is the frequency deviation of adjacent AC system $i$, $e_i$ is the cost coefficient and $K_i^f$ is the primary frequency modulation coefficient of adjacent AC system $i$. In engineering practice, there exist multiple HVDC systems connected to the same adjacent AC system, and we can make them equivalent to one HVDC system and then optimize the control. The total control cost under objective II is:
\begin{align}
\label{total_cost_2}
\sum_{i \in N_{G}} \frac{1}{2} \beta_{i}(u_{i}^{G})^{2}+\sum_{i \in \mathcal{N}_{D}} \frac{e_{i}}{(K_{i}^{f})^{2}}(u_{i}^{D})^{2}
\end{align}

Note that \eqref{total_cost_1} and \eqref{total_cost_2} have the same form, thus the later theoretical analysis takes control objective I as an example and these two objectives will be discussed in the case study in Section VI. Generally, the optimal design method in this paper is applicable so long as the cost function of the HVDC system can be described as the quadratic form of DC power regulation in physical sense.

In summary, the general OEFC problem is as follows:
\begin{subequations}
	\label{oefc_pro}
	\begin{align}
	\min_{u_i^D \in \Omega_i^D,u_i^G}\ &\sum_{i\in \mathcal{N}_G} C_i^G (u_i^G)+\sum_{i\in \mathcal{N}_D} C_i^D (u_i^D) \nonumber\\
	=&\sum_{i\in \mathcal{N}_G} \frac{1}{2} \beta_i (u_i^G)^2 + \sum_{i\in \mathcal{N}_D} \frac{\alpha_i}{\left(Z_{i}^{D}\right)^{2}}\left(u_{i}^{D}\right)^{2}
	\\
	\text{s.t.}\ \sum_{i \in \mathcal{N}} P_{i}&+\sum_{i \in \mathcal{N}_{D}} P_{i}^{D}+\sum_{i \in \mathcal{N}_{G}} u_{i}^{G}+\sum_{i \in \mathcal{N}_{D}} u_{i}^{D}=0
	\end{align}
\end{subequations}
where the constraints is the DC power limitations and the power balance of the whole system. We have the following assumption.

\textbf{Assumption 1.} The OEFC problem \eqref{oefc_pro} is feasible, and the global optimal solution of $u_i^D$ is located in its feasible region $\Omega_i^D=\left\{u_i^D \big|\underline{u}_i^D \le u_i^D \le \overline{u}_i^D \right\}$, where $\underline{u}_i^D$ and $\overline{u}_i^D$ are the lower and upper bounds.

\subsection{Optimal Droop Coefficients and Optimality Analysis}

To select the droop coefficients, we have the following theorem.

\begin{theorem}
	If Assumption 1 holds, we have:
	\begin{enumerate}
		\item the optimal droop coefficients are:
		\begin{subequations}
			\begin{align}
			&k_i^G = \frac{1}{\beta_i},\ i \in \mathcal{N}_G
			\label{opt_droop_1}\\
			&k_i^D = \frac{(Z_i^D)^2}{2 \alpha_i},\ i \in \mathcal{N}_D \label{opt_droop_2}
			\end{align}
		\end{subequations} 
	which are the solutions of the OEFC problem \eqref{oefc_pro}.
		\item with the optimal droop coefficients setting, the dynamic of the whole system is equivalent to a partial primal-dual distributed algorithm, which guarantees the optimality.
	\end{enumerate}
\end{theorem}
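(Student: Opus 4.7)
The plan is to handle the two items separately. The OEFC problem (\ref{oefc_pro}) is a strictly convex quadratic program with a single linear equality constraint and box constraints that are inactive at the optimum by Assumption~1, so Slater's condition holds trivially and the KKT conditions are necessary and sufficient. For item~(1), I form the Lagrangian
\begin{equation}
\mathcal{L}(u,\lambda)=\sum_{i\in\mathcal{N}_G}\tfrac{1}{2}\beta_i(u_i^G)^2+\sum_{i\in\mathcal{N}_D}\frac{\alpha_i}{(Z_i^D)^2}(u_i^D)^2-\lambda\Bigl(\sum_{i\in\mathcal{N}}P_i+\sum_{i\in\mathcal{N}_D}P_i^D+\sum_{i\in\mathcal{N}_G}u_i^G+\sum_{i\in\mathcal{N}_D}u_i^D\Bigr),
\end{equation}
and write stationarity, which gives $u_i^{G\star}=\lambda^\star/\beta_i$ and $u_i^{D\star}=\lambda^\star(Z_i^D)^2/(2\alpha_i)$. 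On the dynamics side, any equilibrium of (\ref{system_model_daes}) satisfies $\omega_i=\omega^\star$ at every bus: summing the nodal balances over $\mathcal{N}$ cancels the line flows by antisymmetry of $B_{ij}\sin(\theta_i-\theta_j)$ and leaves a global power balance that pins down one common frequency. Matching the droop laws $u_i^G=-k_i^G\omega^\star$ and $u_i^D=-k_i^D\omega^\star$ against the KKT formulas under the identification $\lambda^\star=-\omega^\star$ yields exactly the coefficient choices (\ref{opt_droop_1})--(\ref{opt_droop_2}); strict convexity then gives uniqueness.

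For item~(2), the idea is to recognize the closed-loop DAE as a partial primal-dual flow for an augmented version of (\ref{oefc_pro}) in which the single aggregate balance is split into one local balance per bus, with dual multipliers identified with $-\omega_i$. Eliminating $u_i^G=-k_i^G\omega_i$ in (\ref{system_model_daes_1_new}) and combining (\ref{system_model_daes_3}) with (\ref{system_model_daes_5}) to obtain (\ref{system_model_eq_1}) rewrites the generator dynamics as a gradient ascent of the augmented Lagrangian with respect to the dual variable $-\omega_i$, while the algebraic equations at passive and LCC-HVDC buses enforce primal feasibility. With $k_i^G=1/\beta_i$ and $k_i^D=(Z_i^D)^2/(2\alpha_i)$, the droop laws coincide with the primal stationarity conditions for $u_i^G$ and $u_i^D$, so these control variables are held on the primal optimal manifold along the flow; this is precisely what makes the algorithm ``partial,'' since only the dual block is updated by genuine gradient dynamics while the primal block is solved in closed form. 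At equilibrium, all $\omega_i$ collapse to a common value by the argument in item~(1), the augmented KKT system collapses to the KKT system of (\ref{oefc_pro}), and global optimality follows.

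The main obstacle I anticipate is part~(2), specifically the bookkeeping that identifies each differential and algebraic term of (\ref{system_model_daes}) with a partial derivative of the augmented Lagrangian. The delicate points are: (i) showing that the nodal power-flow block $\sum_j B_{ij}\sin(\theta_i-\theta_j)$ is the gradient of a convex potential in the angle variables, so that its appearance in the $\dot\omega_i$ equations is genuinely a gradient term; (ii) handling the LCC-HVDC buses, where (\ref{system_model_eq_1}) mixes a nodal balance with a controller stationarity condition, so that the algebraic constraints match primal feasibility rather than overdetermining it; and (iii) verifying that every equilibrium of this primal-dual flow is a KKT point of (\ref{oefc_pro}) and hence globally optimal by strict convexity. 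Once this equivalence is in hand, item~(2) is established without any separate convergence argument, the actual asymptotic stability being deferred to the Lyapunov analysis in Section~V.
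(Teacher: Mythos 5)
Your proposal is correct and lands on the same structural insight as the paper, but the two arguments package it differently. For item (1) you work with the primal KKT stationarity conditions of \eqref{oefc_pro} and match them to the equilibrium droop relations under the identification $\lambda^\star=-\omega^\star$; the paper instead computes the Lagrangian dual function explicitly, reads off the minimizers $u_i^G=-\lambda/\beta_i$ and $u_i^D=-\frac{(Z_i^D)^2}{2\alpha_i}\lambda$, and identifies $\lambda$ with $+\omega$ (the sign flip is only a convention in how the multiplier enters the Lagrangian; both yield \eqref{opt_droop_1}--\eqref{opt_droop_2}). For item (2) your decomposition is primal: you split the single aggregate balance into per-bus balances, take $-\omega_i$ as the local multipliers, and keep the angles and the line-flow potential on the primal side. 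The paper decomposes on the dual side: it creates local copies $\lambda_i$ of the scalar dual variable, imposes consensus $\lambda_i=\lambda_j$ on edges, and identifies the consensus multipliers $\nu_{ij}$ with the line flows $P_{ij}=-B_{ij}\sin(\theta_i-\theta_j)$ after choosing stepsizes $\tau_i=1/M_i$ and $\gamma_{ij}=B_{ij}\cos(\theta_i-\theta_j)$. Both routes reduce the LCC buses to the algebraic relation \eqref{system_model_eq_1}, both use Assumption~1 to drop the projection onto $\Omega_i^D$, and both stop short of a convergence proof (deferred to the Lyapunov analysis), so the deliverable is the same. What each buys: the paper's dual-consensus bookkeeping makes the edge-multiplier/line-flow identification mechanical, whereas your primal version turns the convexity of the angle potential (your point (i)) into an explicit obligation --- one that is discharged by the function $F_c(\theta)=\sum_{(i,j)\in\mathcal{E}}B_{ij}\left(1-\cos(\theta_i-\theta_j)\right)$, which the paper only introduces later in the stability proof. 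One small wording caveat for item (1): summing the nodal balances over $\mathcal{N}$ does not by itself force a common frequency; equality of the $\omega_i$ at equilibrium follows from the angle differences being constant, and the summation then determines the common value $\omega_{syn}$.
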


\begin{proof}
	Firstly, from \eqref{oefc_pro}, we can derive the objective function of the dual problem of OEFC problem, i.e., the Lagrangian dual function \cite{boyd2004convex}:
	\begin{align}
	\Phi(\lambda)=\inf_{u_i^D \in \Omega_i^D,u_i^G}\left(\sum_{i \in \mathcal{N}_{G}}\left(\frac{1}{2} \beta_i \left(u_i^G\right)^2+\lambda u_{i}^{G}\right)\nonumber\right.\\
	+\left.\sum_{i \in \mathcal{N}_{D}}\left(\frac{\alpha_i}{\left(Z_{i}^{D}\right)^{2}}\left(u_{i}^{D}\right)^{2}+\lambda P_{i}^{D}+\lambda u_{i}^{D}\right)+ \sum_{i \in \mathcal{N}} \lambda P_{i}\right)
	\end{align}
	where $\lambda$ is the dual variable. Then, we can solve the infimum problem explicitly, the results are as follows.
	\begin{subequations}
		\label{results_1}
		\begin{align}
		\Phi(\lambda)=& \sum_{i \in \mathcal{N}_{D}}\left(\frac{\alpha_i}{\left(Z_{i}^{D}\right)^{2}}\left(u_{i}^{D}(\lambda)\right)^{2}+\lambda P_{i}^{D}+\lambda u_{i}^{D}(\lambda)\right) \nonumber\\
		+&\sum_{i \in \mathcal{N}_{G}} \left(-\frac{1}{2\beta_i}\lambda^2\right)+ \sum_{i \in \mathcal{N}} \lambda P_{i}\\
		u_i^G =& -\frac{1}{\beta_i}\lambda\\
		u_i^D(\lambda) =& \left[-\frac{(Z_i^D)^2}{2 \alpha_i} \lambda \right]_{\Omega_i^D}	
	\end{align}
	\end{subequations}
	
	From \eqref{results_1}, the solution of the OEFC problem requires the communication among the buses due to the common variable $\lambda$. In this paper, to solve it in a distributed or decentralized approach, we define the vector $\bm{\lambda}=\{\lambda_i, i\in \mathcal{N}\}$, where each $\lambda_i$ corresponds to the bus $i$. At the optimal solution, there is $\lambda_i = \lambda_j, (i,j)\in \mathcal{E}$. The dual problem of OEFC (DOEFC) is:
	\begin{subequations}
		\label{doefc_pro}
		\begin{align}
		\max_{\bm{\lambda}}\Phi(\bm{\lambda})&=\sum_{i \in \mathcal{N}_{D}}\left(\frac{\alpha_i}{\left(Z_{i}^{D}\right)^{2}}\left(u_{i}^{D}(\lambda_i)\right)^{2}+\lambda_i P_{i}^{D}+\lambda_i u_{i}^{D}(\lambda_i)\right) \nonumber\\
		&+\sum_{i \in \mathcal{N}_{G}} \left(-\frac{1}{2\beta_i}\lambda_i^2\right)+ \sum_{i \in \mathcal{N}} \lambda_i P_{i}\\
		\text{s.t.}\ \lambda_i =& \lambda_j,\ (i,j)\in \mathcal{E}
		\end{align}
	\end{subequations}

The Lagrangian function of the DOEFC problem is:
\begin{align}
L(\bm{\lambda},\bm{\nu}):=\Phi(\bm{\lambda})+\sum_{(i,j) \in \mathcal{E}} \nu_{i j}\left(\lambda_{i}-\lambda_{j}\right)
\end{align}
where the vector $\bm{\nu}=\{\nu_{ij},(i,j)\in \mathcal{E}\}$ is the Lagrangian multiplier.	Then, under Assumption 1, which fields $u_i^D(\lambda)= -\frac{(Z_i^D)^2}{2 \alpha_i} \lambda$, we apply the partial primal-dual distributed algorithm to solve the DOEFC problem, which take the form:
\begin{subequations}
	\begin{align}
	\dot{\lambda}_{i}&=\tau_{i} \frac{\partial L(\bm{\lambda}, \bm{\nu})}{\partial \lambda_{i}}=\tau_{i}\left(-\frac{1}{\beta_i} \lambda_{i}+P_{i}+\sum_{j \in \mathcal{N}} \nu_{i j}\right), i \in \mathcal{N}_{G}
	\label{distri_algo_1}\\
	0&=\frac{\partial L(\bm{\lambda}, \bm{\nu})}{\partial \lambda_{i}}=P_{i}^{D}+u_{i}^{D}\left(\lambda_{i}\right)+P_{i}+\sum_{j \in \mathcal{N}} \nu_{i j},\ i \in \mathcal{N}_{D}
	\label{distri_algo_2}\\
	0&=\frac{\partial L(\bm{\lambda}, \bm{\nu})}{\partial \lambda_{i}}=P_{i}+\sum_{j \in \mathcal{N}} \nu_{i j},\ i \in \mathcal{N}_{P}
	\label{distri_algo_3}\\
	\dot{v}_{i j}&=-\gamma_{i j} \frac{\partial L(\bm{\lambda}, \bm{\nu})}{\partial \nu_{ij}}=-\gamma_{i j}\left(\lambda_{i}-\lambda_{j}\right),\ (i, j) \in \mathcal{E} \label{distri_algo_4}
	\end{align}
\end{subequations}
where $\tau_{i}>0$, $\gamma_{ij}>0$ are the stepsizes. If we identify $\lambda_i$ with $\omega_i$ and $\nu_{ij}$ with $P_{ij}$, set the stepsizes $\tau_i = \frac{1}{M_i}$, $\gamma_{ij} = B_{ij} \cos (\theta_i - \theta_j) $, integrate \eqref{distri_algo_4} and obtain:
\begin{align}
P_{ij} = -B_{ij} \sin(\theta_i - \theta_j)
\end{align}
Then, \eqref{distri_algo_1}-\eqref{distri_algo_4} are identical to the closed-loop system dynamic \eqref{system_model_daes_2}, \eqref{system_model_daes_1}, \eqref{system_model_daes_4} and \eqref{system_model_eq_1}, which means that the system will reach the optimization in a distributed algorithm approach and ensures the optimality.
\end{proof}

\section{Stability Analysis}

In this section, we analyze the stability of the closed-loop system with the designed coordinated-droop-based emergency frequency control strategy by the Lyapunov approach. We rewrite the closed-loop system state equations as \eqref{system_model_re}.
\begin{subequations}
	\label{system_model_re}
	\begin{align}
	&\dot{\theta}_i=\omega_i,\ i \in \mathcal{N}_G \cup \mathcal{N}_D
	\label{system_model_re_1}\\
	&M_i \dot{\omega}_i = P_i - \sum_{j\in \mathcal{N}} B_{ij} \sin(\theta_{ij}) - k_i^G \omega_i,\ i \in \mathcal{N}_G
	\label{system_model_re_2}\\
	&0=P_i + p_i^{dc} - \sum_{j\in \mathcal{N}} B_{ij} \sin(\theta_{ij}),\ i \in \mathcal{N}_D
	\label{system_model_re_3}\\
	&0=P_i - \sum_{j\in \mathcal{N}} B_{ij} \sin(\theta_{ij}),\ i \in \mathcal{N}_P\label{system_model_re_4}\\
	&T_i^D \dot{p}_i^{dc} = -p_i^{dc}+P_i^D-k_i^D \omega_i,\ i \in \mathcal{N}_D \label{system_model_re_5}\\
	&k_i^G = \frac{1}{\beta_i},\ i \in \mathcal{N}_G\\
	&k_i^D = \frac{(Z_i^D)^2}{2 \alpha_i},\ i \in \mathcal{N}_D
	\end{align}
\end{subequations}
where $\theta_{ij}=\theta_i - \theta_j$ is the phase angle difference between the connected nodes $i$ and $j$. 

Then, we denote the column vectors $\theta_G = \{ \theta_i,i \in \mathcal{N}_G\}$, $\theta_D = \{ \theta_i,i \in \mathcal{N}_D\}$, $\theta_P = \{ \theta_i,i \in \mathcal{N}_P\}$, $\omega_G = \{ \omega_i,i \in \mathcal{N}_G\}$, $\omega_D = \{ \omega_i,i \in \mathcal{N}_D\}$, $p^{dc}=\{p^{dc}_i,i \in \mathcal{N}_D\}$ and $\theta=(\theta_G^T,\theta_D^T,\theta_P^T)^T$, $\omega=(\omega_G^T,\omega_D^T)^T$. 

We have the following assumption.

\textbf{Assumption 2.} Due to the dead zone setting, the initial post-fault state is different from the state that the LCC-HVDC droop control starts working. We ignore the effect of dead zone and discuss it in case study. We suppose that the adjacent AC systems are stable and their stabilities are not discussed.
 
\textbf{Assumption 3.} The equilibrium point $(\theta^*,\omega_G^*,p^{dc*})$ of the closed-loop system \eqref{system_model_re} satisfies:
\begin{align}
\theta^* \in \Theta=\left\{\theta\in \mathbb{R}^n \Big||\theta_i - \theta_j|<\frac{\pi}{2},\forall (i,j) \in \mathcal{E}\right\}
\end{align}

In practice, Assumption 3 is usually considered as a security constraint for power flow solutions. Then, we have the following theorem for the stability of the system \eqref{system_model_re}.


\begin{theorem}
	If Assumption 2-3 hold, for the closed-loop system \eqref{system_model_re}, the following statements hold:
	\begin{enumerate}
		\item there exists an equilibrium point $(\theta^*,\omega_G^*,p^{dc*}) \in \Psi=\mathbb{R}^n \times \mathbb{R}^{n_G}\times \mathbb{R}^{n_D}$.
		\item there exist a domain $\Psi^s \subset \Psi$ such that for any initial state $(\theta^0,\omega_G^0,p^{dc0}) \in \Psi^s$ that satisfies the DAEs \eqref{system_model_re}, the state trajectory converges to the unique equilibrium state $(\theta^*,\omega_G^*,p^{dc*})\in \Psi$.
	\end{enumerate}
\end{theorem}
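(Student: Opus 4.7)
The plan is to dispatch part (1) by reducing it to the optimality correspondence already proved in Theorem~1, and then to attack part (2) with a composite energy-type Lyapunov function combined with LaSalle's invariance principle tailored to the DAE structure.

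For part (1), I would set $\dot{\omega}_i = 0$, $\dot{p}_i^{dc} = 0$ and let $\omega_i = \omega^*$ be a common synchronous deviation in \eqref{system_model_re}. The resulting stationarity equations are exactly the primal-dual optimality conditions of the OEFC problem that were derived inside the proof of Theorem~1. Because OEFC is a strictly convex quadratic program that is feasible under Assumption~1, the optimal primal-dual pair $(u_i^{G*}, u_i^{D*}, \lambda^*)$ is unique; Assumption~3 in turn selects the phase-angle branch in $\Theta$ satisfying the AC power-flow equations with $p_i^{dc*} = P_i^D - k_i^D \omega^*$. This produces the required equilibrium $(\theta^*, \omega_G^*, p^{dc*})\in\Psi$.

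For part (2), I would first change to a reference frame rotating at $\omega^*$ and introduce deviations $\tilde{\omega}_i := \omega_i - \omega^*$, $\tilde{p}_i^{dc} := p_i^{dc}-p_i^{dc*}$, $\tilde{\theta}_{ij} := \theta_{ij}-\theta_{ij}^*$, so that the equilibrium is the origin. The Lyapunov candidate I propose is
\begin{align*}
V \;=\;& \tfrac{1}{2}\!\sum_{i\in\mathcal{N}_G}\! M_i \tilde{\omega}_i^{\,2} \;+\; \tfrac{1}{2}\!\sum_{i\in\mathcal{N}_D}\! \frac{T_i^D}{k_i^D}\,(\tilde{p}_i^{dc})^{2} \\
&+ \sum_{(i,j)\in\mathcal{E}} B_{ij}\bigl[\cos\theta_{ij}^{*} - \cos\theta_{ij} - \sin\theta_{ij}^{*}\,(\theta_{ij}-\theta_{ij}^{*})\bigr],
\end{align*}
namely two quadratic kinetic-style terms plus the Bregman divergence of the cosine potential $-\sum B_{ij}\cos\theta_{ij}$ around $\theta^*$. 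By Assumption~3, $-\cos\theta_{ij}$ is strictly convex at each edge, so this Bregman piece is locally positive definite in the phase angles modulo the uniform-shift symmetry (eliminated by fixing any one reference bus). Hence $V$ is locally positive definite and vanishes only at the equilibrium.

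Next I would differentiate $V$ along trajectories of \eqref{system_model_re}. The kinetic terms contribute $\sum_{\mathcal{N}_G} M_i\tilde{\omega}_i\dot{\omega}_i$ and $\sum_{\mathcal{N}_D}(T_i^D/k_i^D)\tilde{p}_i^{dc}\dot{p}_i^{dc}$, while the Bregman term contributes $\sum_i [\sum_j B_{ij}(\sin\theta_{ij}-\sin\theta_{ij}^{*})]\dot{\theta}_i$. Substituting \eqref{system_model_re_2} and \eqref{system_model_re_5}, then using the algebraic balances \eqref{system_model_re_3}--\eqref{system_model_re_4} to eliminate $\dot{\theta}_i$ at LCC-HVDC and passive buses via the equilibrium identities $P_i+p_i^{dc*}=\sum_j B_{ij}\sin\theta_{ij}^{*}$ and $P_i=\sum_j B_{ij}\sin\theta_{ij}^{*}$, the mixed couplings between $\tilde{\theta}$ and $(\tilde{\omega}_G,\tilde{p}^{dc})$ should telescope, leaving
\begin{align*}
\dot V \;=\; -\sum_{i\in\mathcal{N}_G} k_i^G\,\tilde{\omega}_i^{\,2} \;-\; \sum_{i\in\mathcal{N}_D}\frac{1}{k_i^D}\,(\tilde{p}_i^{dc})^{2} \;\le\; 0.
\end{align*}
Applying LaSalle on a sublevel set of $V$ contained in $\Psi$, on the largest invariant set one has $\tilde{\omega}_G\equiv 0$ and $\tilde{p}^{dc}\equiv 0$; feeding this back into \eqref{system_model_re_2} and \eqref{system_model_re_5} forces $\sum_j B_{ij}(\sin\theta_{ij}-\sin\theta_{ij}^{*})=0$ at every bus, which on $\Theta$ collapses to $\theta=\theta^*$ by strict monotonicity of the power-flow map. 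Taking $\Psi^s$ to be any such sublevel set then delivers local convergence to the unique equilibrium.

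The principal obstacle I anticipate is the DAE character of \eqref{system_model_re}: $\omega_i$ for $i\in\mathcal{N}_D$ and $\theta_i$ for $i\in\mathcal{N}_P$ are not states but are implicitly fixed by the power-balance Jacobian, so the telescoping in $\dot V$ relies on that Jacobian being nonsingular (index-1) at $\theta^*$, which is what Assumption~3 and network connectivity buy. Verifying this index-1 condition and then invoking the DAE version of LaSalle is where the delicate bookkeeping lives, since the algebraic identities must be inserted at precisely the right step for the cross-terms between phase, generator frequency, and dc-power deviations to cancel.
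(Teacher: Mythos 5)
Your proposal is correct and follows essentially the same route as the paper: your Bregman divergence of the cosine potential is exactly the paper's $F_{c}(\theta)-F_{c}(\theta^{*})-\nabla_{\theta}^{T}F_{c}(\theta^{*})(\theta-\theta^{*})$ term, your weights $T_i^D/k_i^D$ correspond to the paper's choice $d_i=1/k_i^D$, and you arrive at the identical $\dot V=-\sum_{i\in\mathcal{N}_G}k_i^G\tilde\omega_i^2-\sum_{i\in\mathcal{N}_D}(1/k_i^D)(\tilde p_i^{dc})^2$ followed by the same invariance argument and the same regularity (index-1) check of the algebraic equations via the positive-definite weighted Laplacian under Assumption 3. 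The only cosmetic difference is in part (1), where you recover the equilibrium from the OEFC optimality conditions while the paper computes $\omega_{syn}$ directly by summing the bus equations; your explicit handling of the uniform phase-shift symmetry and the rotating reference frame is, if anything, slightly more careful than the paper's.
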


\begin{proof}
	1) When the system is at the equilibrium state, we have $\dot{\omega}_i=0,i \in \mathcal{N}_G$, $\dot{p}_i^{dc}=0,i \in \mathcal{N}_D$ and $\omega_i^*=\omega_j^*=\omega_{syn}, \forall i, j \in \mathcal{N}$. Then, sum \eqref{system_model_re_2}-\eqref{system_model_re_4} and we have the following equations at the equilibrium point:
	\begin{subequations}
		\begin{align}
	\sum_{i\in\mathcal{N}} P_i &+ \sum_{i\in\mathcal{N}_D} p_i^{dc*}-\sum_{i\in\mathcal{N}_G} k_i^G \omega_{i}^*=0\\
	&p_{i}^{dc*}=P_{i}^{D}-k_{i}^{D} \omega_{i}^{*},\ i\in \mathcal{N}_D
	\end{align}
	\end{subequations}
	which yields:
	\begin{align}
	\omega_{i}^*=\omega_{syn}=\frac{\sum\limits_{i\in\mathcal{N}} P_i + \sum\limits_{i\in\mathcal{N}_D} P_i^D}{\sum\limits_{i\in\mathcal{N}_G} k_i^G + \sum\limits_{i\in\mathcal{N}_D} k_i^D}
	\end{align}
	
	Note that when there exists power imbalance in the hybrid AC-DC system, $\sum_{i\in\mathcal{N}} P_i + \sum_{i\in\mathcal{N}_D} P_i^D \not=0$, which yields  $\omega_{i}^*\not=0,i \in \mathcal{N}_G \cup \mathcal{N}_D$, thus $\theta_i^*$ is time variant and the equilibrium $(\theta^*,\omega^*,p^{dc*})$ may not be unique. Nevertheless, the differences $\theta_i^*-\theta_j^*, (i,j)\in\mathcal{E}$, i.e., the relative phase angles are constant. It follows from \cite{skar1980stability} and \cite{araposthatis1981analysis} that there exists at most one power flow solution which satisfies $\theta\in\Theta$ in the lossless system \eqref{system_model_re}, thus the equilibrium is unique under the definition of relative phase angle.
	
	2) Before the stability analysis of the equilibrium, we first prove the regularity of algebraic equations \eqref{system_model_re_4}, which also ensures the existence and of the solution of \eqref{system_model_re}. We define the following function with respect to $\theta$:
	\begin{align}
	F_{c}(\theta)=\sum_{(i, j) \in \mathcal{E}} B_{i j}\left(1-\cos \left(\theta_{i}-\theta_{j}\right)\right)
	\end{align}
	The Hessian matrix $H_c(\theta)$ of $F_{c}(\theta)$ is:
	\begin{align}
	\label{h_c}
	\small
	\left[\begin{array}{cccc}
	\bar{B}_{11}(\theta) & -B_{12} \cos \left(\theta_{12}\right) & \cdots & -B_{1 n} \cos \left(\theta_{1 n}\right) \\
	-B_{21} \cos \left(\theta_{21}\right) & \bar{B}_{22}(\theta) & \cdots & -B_{2 n} \cos \left(\theta_{2 n}\right) \\
	\vdots & \vdots & \ddots & \vdots \\
	-B_{n 1} \cos \left(\theta_{n 1}\right) & -B_{n 2} \cos \left(\theta_{n 2}\right) & \cdots & \bar{B}_{n n}(\theta)
	\end{array}\right]
	\end{align}
	where $\bar{B}_{i i}(\theta)=\sum_{j \in \mathcal{N}} B_{i j} \cos \left(\theta_{i j}\right)$ and $\theta_{ij}=\theta_i - \theta_j$. The $H_c(\theta)$ is the Laplacian matrix of the graph $\mathcal{G}$ with line weights $B_{ij}\cos(\theta_i - \theta_j)$, and the line weights are positive Under the Assumption 3. Thus, $H_c(\theta)$ is positive definite and its principle minors are all nonsingular \cite{brualdi1991combinatorial}. Notice that the Jacobian matrix of \eqref{system_model_re_4} is the principle minor of $H_c(\theta)$ and thus nonsingular. Therefore, the algebraic equations \eqref{system_model_re_4} are regular.
	
	Then, we prove the asymptotic stability of the equilibrium $(\theta^*,\omega^*,p^{dc*})$ by the Lyapunov approach. 
	
	Consider the following Lyapunov function candidate:
	\begin{align}
	&V(\theta, \omega_{G}, p^{dc})=V_{1}+V_{2}\\
	&V_{1}\left(\theta, \omega_{G}\right)=F_{c}(\theta)-F_{c}\left(\theta^{*}\right)-\nabla_{\theta}^{T} F_{c}\left(\theta^{*}\right)\left(\theta-\theta^{*}\right)\nonumber\\&\qquad\qquad\ \ \ +\frac{1}{2}\left(\omega_{G}-\omega_{G}^{*}\right)^{T} M_{G}\left(\omega_{G}-\omega_{G}^{*}\right) \nonumber\\
	&V_{2}\left(p^{dc}\right)=\sum_{i \in \mathcal{N}_{D}}\left(\frac{1}{2} d_{i} T_{i}^{D}\left(p_{i}^{dc}-p_{i}^{dc*}\right)^{2}\right)\nonumber
	\end{align}
	where $M_G=diag(M_i),i\in \mathcal{N}_G$, and $d_i>0$ are constant.
	
	Taking time derivative of $V_1$, we have:
	\begin{align}
	\label{lyapunuv_1_1}
	\dot{V}_1=\left(\nabla_{\theta}^{T} F_{c}(\theta)-\nabla_{\theta}^{T} F_{c}\left(\theta^{*}\right)\right) \dot{\theta}+\left(\omega_{G}-\omega_{G}^{*}\right)^{T} M_{G} \dot{\omega}_{G}
	\end{align}
	where
	\begin{align}
	&\nabla_{\theta} F_{c}(\theta)-\nabla_{\theta} F_{c}\left(\theta^{*}\right)\nonumber\\
	&=\left[\begin{array}{l}
	\mathop{col}\limits_{i \in \mathcal{N}_{G}}\left(\sum\limits_{j \in N} B_{i j} \sin \left(\theta_{ij}\right)-\sum\limits_{j \in N} B_{i j} \sin \left(\theta_{ij}^{*}\right)\right) \\
	\mathop{col}\limits_{i \in \mathcal{N}_{D}}\left(\sum\limits_{j \in N} B_{i j} \sin \left(\theta_{ij}\right)-\sum\limits_{j \in N} B_{i j} \sin \left(\theta_{ij}^{*}\right)\right) \\
	\mathop{col}\limits_{i \in \mathcal{N}_{P}}\left(\sum\limits_{j \in \mathcal{N}} B_{i j} \sin \left(\theta_{ij}\right)-\sum\limits_{j \in \mathcal{N}} B_{i j} \sin \left(\theta_{ij}^{*}\right)\right)
	\end{array}\right]\nonumber\\
	&=\left[\begin{array}{l}
	\mathop{col}\limits_{i \in \mathcal{N}_{G}}\left(-k_{i}^{G} (\omega_{i}-\omega_{i}^{*})-M_{i} \dot{\omega}_{i}\right) \\
	\mathop{col}\limits_{i \in \mathcal{N}_{D}}\left(p_{i}^{dc}-p_{i}^{dc*}\right) \\
	\mathop{col}\limits_{i \in \mathcal{N}_{P}}\left(0\right)
	\end{array}\right]
	\label{lyapunuv_1_2}
	\end{align}
	Substitute \eqref{lyapunuv_1_2} into \eqref{lyapunuv_1_1} and we get:
	\begin{align}
	\label{v_1_1}
	\dot{V}_1 =& \sum_{i \in \mathcal{N}_{G}}\left(-k_{i}^{G} (\omega_{i}^{2}-\omega_{i}^{*} \omega_{i})\right)+\sum_{i \in \mathcal{N}_{D}}\left(\left(p_{i}^{dc}-p_{i}^{dc*}\right) \omega_{i}\right)\nonumber\\
	&-\sum_{i \in \mathcal{N}_{G}}\left(M_{i} \omega_{i}^{*} \dot{\omega}_{i}\right)
	\end{align}
	Due to $\omega_i^*=\omega_j^*=\omega_{syn}$, the third term of \eqref{v_1_1} can be represented as:
	\begin{align}
	&\sum_{i \in \mathcal{N}_{G}}\left(M_{i} \omega_{i}^{*} \dot{\omega}_{i}\right)=\omega_{i}^{*}\sum_{i \in \mathcal{N}_{G}}\left(M_{i}  \dot{\omega}_{i}\right)\nonumber\\
	&=\omega_{i}^{*}\left(\sum_{i \in \mathcal{N}} P_{i}+\sum_{i \in \mathcal{N}_{D}} p_{i}^{dc}-\sum_{i \in \mathcal{N}_{G}} k_{i}^{G} \omega_{i}\right)\nonumber\\
	&=\sum_{i \in \mathcal{N}_{G}}\left(k_{i}^{G}\left(\left(\omega_{i}^{*}\right)^{2}-\omega_{i} \omega_{i}^{*}\right)\right)+\sum_{i \in \mathcal{N}_{D}}\left(\omega_{i}^{*}( p_{i}^{dc}- p_{i}^{dc*})\right)
	\label{v_1_2}
	\end{align}
	Substitute \eqref{v_1_2} into \eqref{v_1_1}, we have:
	\begin{align}
	\dot{V}_1 =&-\sum_{i \in \mathcal{N}_{G}}\left(k_{i}^{G}\left(\omega_{i}-\omega_{i}^{*}\right)^{2}\right)\nonumber\\
	&+\sum_{i \in \mathcal{N}_{D}}\left(\left(p_{i}^{dc}-p_{i}^{dc*}\right)\left(\omega_{i}-\omega_{i}^{*}\right)\right)
	\end{align}
	
	Consider $V_{2}\left(p^{dc}\right)$, we have:
	\begin{align}
	&\dot{V}_2 = \sum_{i \in \mathcal{N}_{D}}\left(d_{i} T_{i}^{D}\left(p_{i}^{dc}-p_{i}^{dc*}\right) \dot{p}_{i}^{dc}\right)\nonumber\\
	&=\sum_{i \in \mathcal{N}_{D}}\left(d_{i}\left(p_{i}^{dc}-p_{i}^{dc*}\right)\left(-p_{i}^{dc}+p_{i}^{dc*}+k_{i}^{D} \omega_{i}^{*}-k_{i}^{D} \omega_{i}\right)\right)\nonumber\\
	&=\sum_{i \in \mathcal{N}_{D}}\left(-d_{i}\left(p_{i}^{dc}-p_{i}^{dc*}\right)^{2}-d_{i} k_{i}^{D}\left(p_{i}^{dc}-p_{i}^{dc*}\right)\left(\omega_{i}-\omega_{i}^{*}\right)\right)\nonumber
	\end{align}
	
	Hence, we can derive:
	\begin{align}
	\dot{V}=&-\sum_{i \in \mathcal{N}_{G}}\left(k_{i}^{G}\left(\omega_{i}-\omega_{i}^{*}\right)^{2}\right)-\sum_{i \in \mathcal{N}_{D}}\left(d_{i}\left(p_{i}^{dc}-p_{i}^{dc*}\right)^{2}\right)\nonumber\\
	&+(1-d_i k_i^D)\sum_{i \in \mathcal{N}_{D}}\left(\left(p_{i}^{dc}-p_{i}^{dc*}\right)\left(\omega_{i}-\omega_{i}^{*}\right)\right)
	\end{align}
	
	Therefore, by selecting $d_i={1}/{k_i^D}>0,i\in \mathcal{N}_D$, we obtain $\dot{V}\le 0$, and $\dot{V}=0$ only at the equilibrium.
	
	Then, we prove that the Lyapunov function candidate $V(\theta, \omega_{G}, p^{dc})\ge 0$ and the equilibrium $z^*=(\theta^*,\omega^*,p^{dc*})$ is a strict minimum point of $V$. It can be easily verified that $V|_{z^*}=0$. For $\nabla V|_{z^*}$, we have:
	\begin{align}
	&\nabla V|_{z^*}=col(\nabla_\theta V,\nabla_{\omega_G} V,\nabla_{p^{dc}} V)|_{z^*}\nonumber\\
	&\nabla_\theta V = \nabla_{\theta} F_{c}(\theta)-\nabla_{\theta} F_{c}\left(\theta^{*}\right)\nonumber\\
	&\nabla_{\omega_G} V =  M_G \left(\omega_G - \omega_G^* \right)\nonumber\\
	&\nabla_{p^{dc}} V = d^D T^D (p^{dc}-p^{dc*})
	\end{align}
	We can obtain that $\nabla V|_{z^*}=\bm{0}\in\mathbb{R}^{n+n_G+n_D}$. Then, for $\nabla^2 V$, we have:
	\begin{align}
	\nabla^2 V = diag(H_c,H_G,H_D)
	\end{align}
	where $\nabla^2 V$ is a block diagonal matrix, $H_c$ is positive definite according to \eqref{h_c}. $H_G$ and $H_D$ are respectively the Hessian matrices of $\frac{1}{2}\left(\omega_{G}-\omega_{G}^{*}\right)^{T} M_{G}\left(\omega_{G}-\omega_{G}^{*}\right)$ and $V_2$, which are both positive functions, thus $H_G$ and $H_D$ are also positive definite. Therefore, we have $\nabla^2 V\succ 0$ and equilibrium $z^*$ is a strict minimum point of $V$.
	
	Moreover, it's obvious to see that the invariant set $\left\{(\theta,\omega_G,p^{dc})\big|\dot{V}(\theta,\omega_G,p^{dc})=0\right\}$ contains only the equilibrium point. In summary, the Lyapunov function candidate $V(\theta, \omega_{G}, p^{dc})$ satisfies the stability criterion \cite{hill1990stability}, which completes the proof.
	
\end{proof}

\begin{remark}
	According to the stability analysis, the equilibrium point is asymptotically stable so long as the droop coefficients $k_i^D>0, i\in \mathcal{N}_D$, which is a sufficient condition. The size of the attraction domain is not discussed in this paper, which we will focus on in the future work.
\end{remark}

\section{Case Study}

In this section, the effectiveness of the proposed coordinated-droop-based emergency frequency control, the optimality of the selected droop coefficients and the system stability are illustrated by a case study on the CloudPSS platform \cite{song2019cloudpss}, \cite{cloudpss}.

\subsection{Test System Description}

The topology of the MIDC test system is shown in Fig. \ref{case_topo}, which is a modified IEEE New England system combining the CIGRE HVDC benchmark systems \cite{faruque2005detailed}. The full electromagnetic transient (EMT) model of the test system is built on the CloudPSS platform \cite{liu2018modeling}. The main AC system is connected with four $\pm$660kV monopolar 12-pulse CIGRE LCC-HVDC systems, in which the rectifier adopts constant power control while the inverter adopts constant $\gamma$ control. The power transmission directions are shown in Fig. \ref{case_topo}, i.e., the LCC1, LCC2 and LCC3 are SE-LCC systems while LCC4 is RE-LCC system for the AC main system. The adjacent AC systems adopt the equivalent centre of inertia (COI) model. The LCC-HVDC systems are implemented with the designed P-f droop control, while the seven generators (equivalent from generator units) are implemented with primary droop control.

\begin{figure}[htb]
	\centering
	\includegraphics[width=0.48\textwidth]{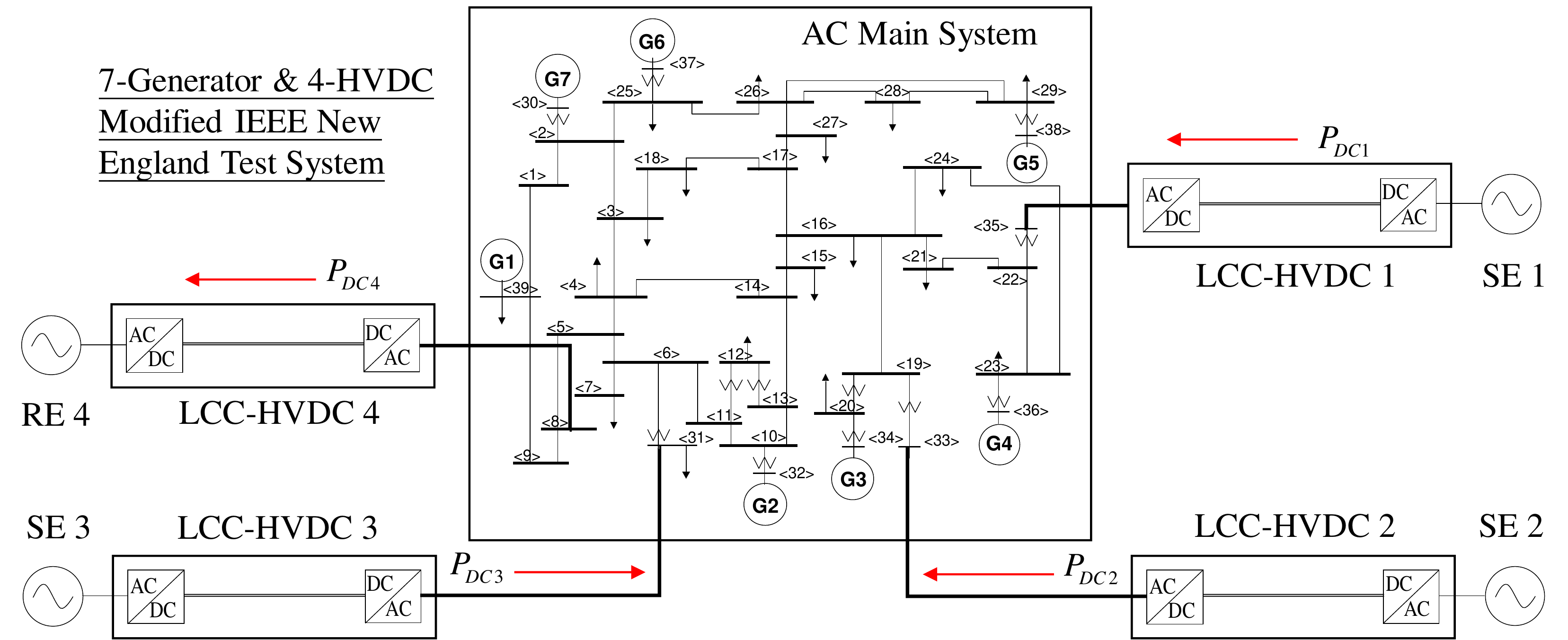}
	\caption{Topology of the MIDC test system}
	\label{case_topo}
\end{figure}

Set the active power base-value as $P_B=100MW$, and set the cost coefficients of generators $\beta_1=\beta_5=\beta_6=0.1p.u.$ and $\beta_2=\beta_3=\beta_4=\beta_7=0.2p.u.$. For LCC-HVDC systems, the related parameters under control objective I and II are shown in Table I.


\begin{table}[htbp] 
	\footnotesize
	\centering
	\vspace{-0.2cm}  %
	\setlength{\abovecaptionskip}{0.cm}
	\setlength{\belowcaptionskip}{-0.cm}
	\caption{Related Parameters of LCC-HVDC Systems}
	\label{tab1} 
	\begin{tabular}{cccccc} 
		\toprule 
		No. & $P_i^D$ & $\overline{P}_i^D$, $\underline{P}_i^D$ & $\alpha_i$ & $K_i^f$ & $e_i$\\ 
		\midrule 
		LCC1 & 645 MW & 750,550 MW & 0.05 p.u. & 25 p.u. & 30 p.u.\\
		LCC2 & 630 MW & 750,550 MW & 0.05 p.u. & 30 p.u. & 30 p.u.\\
		LCC3 & 660 MW & 750,550 MW & 0.05 p.u. & 20 p.u. & 30 p.u.\\
		LCC4 & 500 MW & 600,400 MW & 0.05 p.u. & 25 p.u. & 30 p.u.\\  
		\bottomrule 
	\end{tabular} 
\end{table}

The optimal droop coefficients are obtained by \eqref{opt_droop_1} and \eqref{opt_droop_2}, and under Control Objective II, the optimal droop coefficients for LCC-HVDCs are:
\begin{align}
k_i^D=\frac{(K_i^f)^2}{2 e_i},i \in \mathcal{N}_D
\end{align} 
Then we define the \textit{average droop coefficients}, i.e., the average value of optimal droop coefficients, and respectively set them for LCC-HVDC systems and generators as the contrast group. For the generators, the optimal droop coefficients are: $k_1^G=k_5^G=k_6^G=10p.u.$, $k_2^G=k_3^G=k_4^G=k_7^G=5p.u.$, and the average ones are all $7.14p.u.$. For the LCC-HVDC systems, the optimal and average droop coefficients under Control Objective I and II are shown in Table II.

\begin{table}[htbp] 
	\footnotesize
	\centering
	\vspace{-0.2cm}  %
	\setlength{\abovecaptionskip}{0.cm}
	\setlength{\belowcaptionskip}{-0.cm}
	\caption{Optimal Droop coefficients of LCC-HVDC Systems}
	\label{tab2} 
	\begin{tabular}{ccccc} 
		\toprule 
		 & \multicolumn{2}{c}{Objective I} & \multicolumn{2}{c}{Objective II}\\
		\midrule
		No. & Opt. droop & Ave. droop & Opt. droop & Ave. droop \\ 
		\midrule 
		LCC1 & 11.03 p.u. & 10.88 p.u.& 10.42 p.u. & 10.63 p.u. \\
		LCC2 & 14.40 p.u. & 10.88 p.u. & 15.00 p.u. & 10.63 p.u. \\
		LCC3 & 8.10 p.u. & 10.88 p.u. & 6.67 p.u. & 10.63 p.u. \\
		LCC4 & 10.00 p.u. & 10.88 p.u. & 10.42 p.u. & 10.63 p.u. \\  
		\bottomrule 
	\end{tabular} 
\end{table}

We set generator G6 trip at the time of 8s, which causes 530MW power imbalance (approximate 10\% of the system capacity) and can be considered as emergency situation. Based on the above settings, we carry on the following simulations and analyses 

\subsection{Effectiveness of Coordinated-Droop-Based Emergency frequency Control}

To verify the effectiveness of the proposed control strategy, we set the following three subcases: (1) generators have droop control with optimal coefficients while LCC-HVDCs have no droop control. (2) all generators and LCC-HVDCs have droop control with optimal coefficients under Control Objective I. (3) on the basis of (2), the LCC-HVDC droop have dead zone with 49.8Hz limit. Then, the frequencies of AC main system and the active powers of LCC-HVDCs and generators are shown in Fig. \ref{case_1} and Fig. \ref{case_2} respectively.

\begin{figure}[htb]
	\centering
	\includegraphics[width=0.48\textwidth]{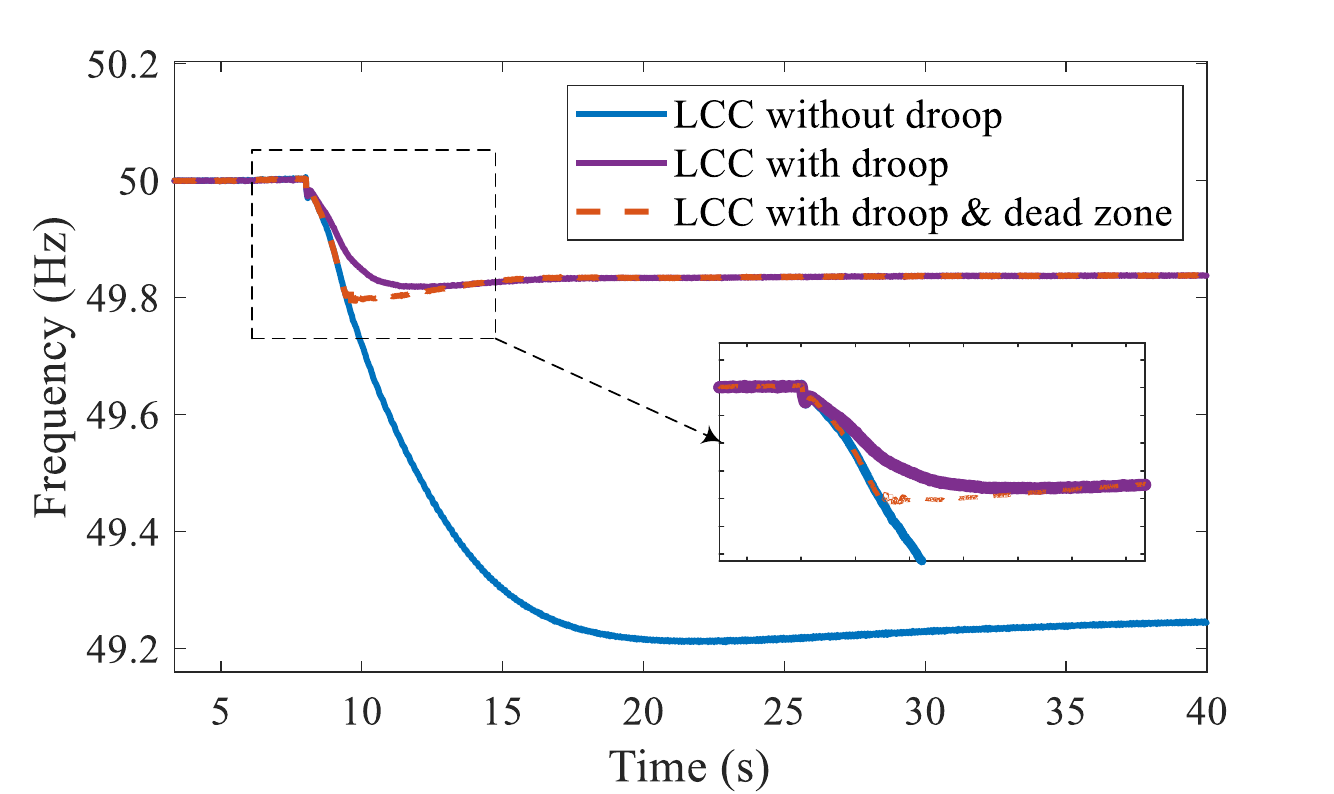}
	\caption{Frequencies of AC main system}
	\label{case_1}
\end{figure}

\begin{figure}[htb]
	\centering
	\includegraphics[width=0.48\textwidth]{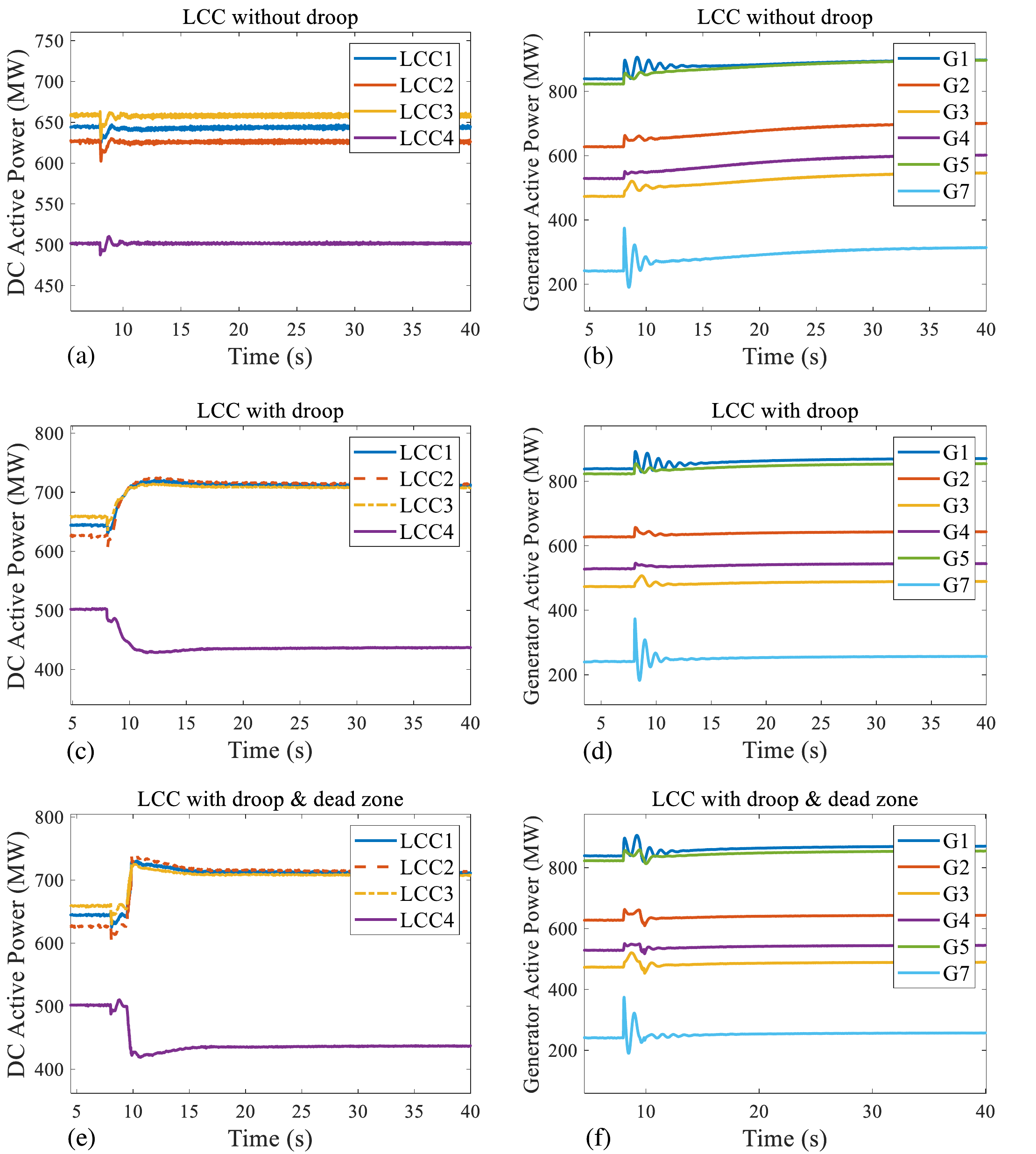}
	\caption{Active powers of LCC-HVDCs and generators. (a)(b) LCC-HVDCs have no droop control. (c)(d) LCC-HVDCs have droop control. (e)(f) LCC-HVDCs have droop control and dead zone setting.}
	\label{case_2}
\end{figure}

In Fig. \ref{case_1}, when LCC-HVDC systems have no droop control, the AC main system frequency reduces to approximate 49.25Hz at about 30s. In engineering practice, this low frequency has already caused severe instability. When the LCC-HVDCs and generators are all equipped with droop control, the system frequency stabilizes at approximate 49.85Hz at about 15s. Compared with subcase (1), subcase (2) has shorter transient time and the steady-state frequency of subcase (2) is closer to nominal frequency. Thus, the proposed coordinated-droop-based emergency frequency control strategy is effective. Further, we add the dead zone to LCC-HVDC droop control in subcase (3), and we find that the dead zone setting has no effect on the steady-state frequency, and has tiny influence on the transient frequency process which can be ignored. Therefore, it is reasonable to ignore the dead zone setting in the optimal control design and stability analysis process. As shown in Fig. \ref{case_2}, in subcase (1), the emergency frequency regulation can only rely on the generators' primary droop, but the power adjustment speed of generators is relatively slow. In subcase (2) and (3), the fast transmission power adjustability of LCC-HVDC systems is utilized to provide considerable power support to the AC main system and relieve the frequency modulation pressure of the generators, which can also verifies the effectiveness of proposed coordinated control strategy. By comparing Fig. \ref{case_2}(c)(d) and Fig. \ref{case_2}(e)(f), we can also show that the dead zone has little effect on the control performance.

Moreover, we output the DC variables curves of LCC-HVDC 3 in subcase (2) as shown in Fig. \ref{case_3}. With the increase of the transmission active power, the DC current also increases while the DC voltage decreases slightly, and the $\alpha$ angle decreases during the transient process. These change trends are consistent with the theoretical analysis in section II.A and verify the correctness of designed P-f droop characteristic.

\begin{figure}[htb]
	\centering
	\includegraphics[width=0.48\textwidth]{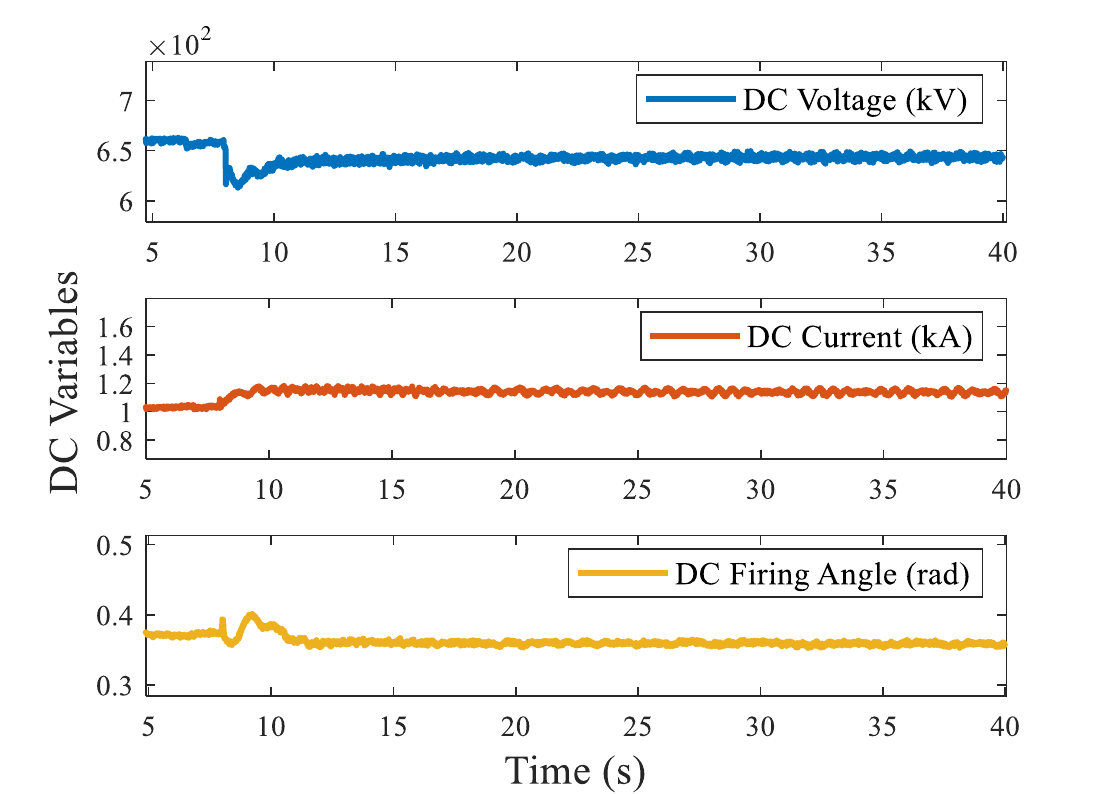}
	\caption{DC variables of LCC-HVDC3}
	\label{case_3}
\end{figure}

\subsection{Optimality of Droop Coefficients}

To reasonably allocate the power imbalance among LCC-HVDC systems and generators during control process, the optimality of the selected droop coefficients is verified. Four groups of simulations are carried out, which adopt the optimal droop coefficients (Opt. droop) and average droop coefficients (Ave. droop) under Control objective I and II respectively. The system frequencies all stabilize at above 49.8Hz with these four groups of coefficients (not shown here), which guarantees the frequency stability of the MIDC system. Then, we output the active powers of LCC-HVDCs and calculate the total control costs, as shown in Fig. \ref{case_4}.

\begin{figure}[htb]
	\centering
	\includegraphics[width=0.48\textwidth]{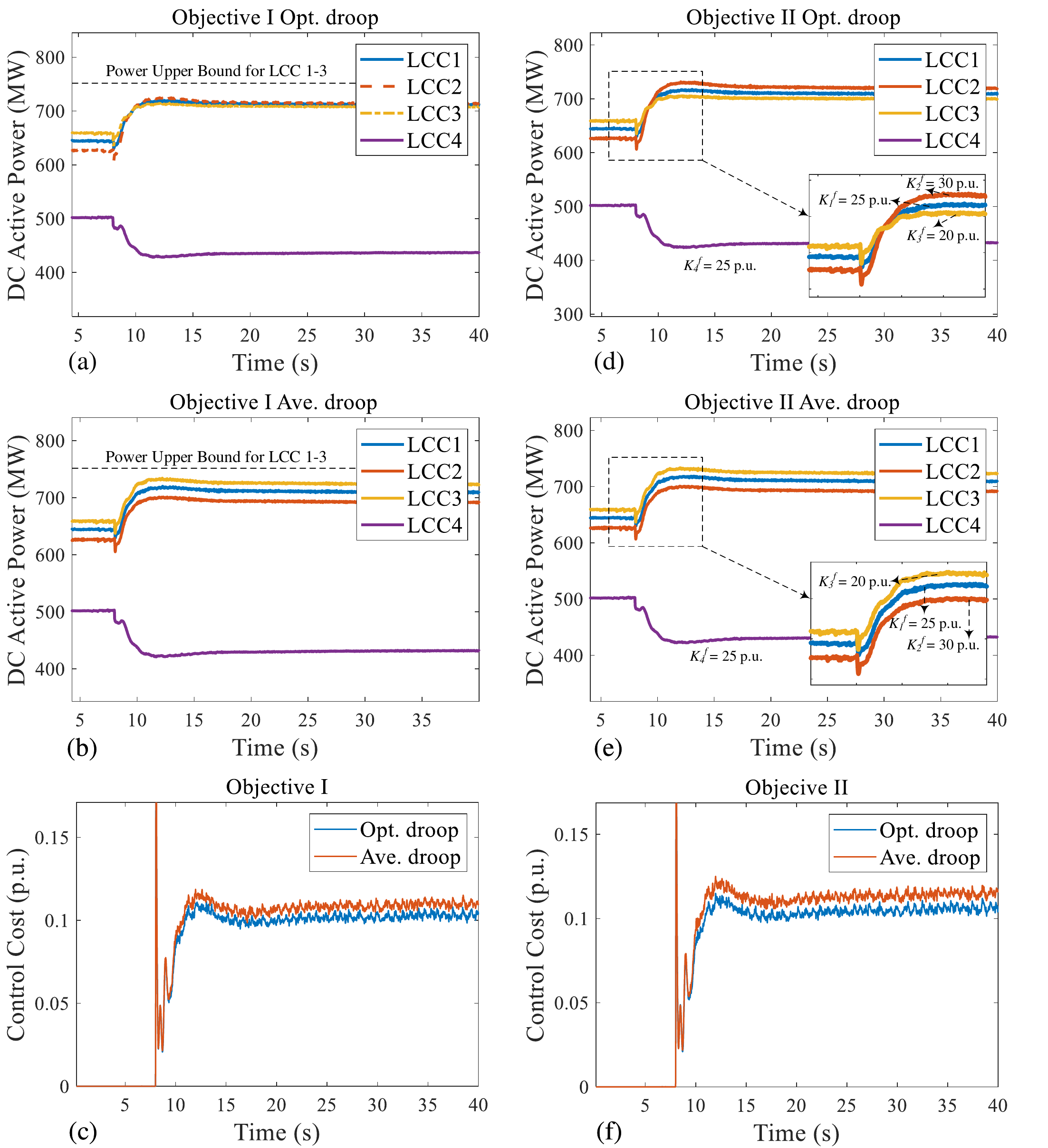}
	\caption{Active powers of LCC-HVDCs and control costs. (a)(b)(c) under control objective I. (d)(e)(f) under control objective II.}
	\label{case_4}
\end{figure}

As for Control Objective I, comparing Fig. \ref{case_4}(a) and \ref{case_4}(b), the steady-state powers of LCC 1-3 almost reach the same level with the optimal droop, while the power variations of the LCC-HVDCs are almost equal with average droop. Thus, due to the same upper bound setting for LCC 1-3, the optimal droop achieves the Control Objective I, i.e., the LCC-HVDC with larger regulation margin provides more power support. By Fig. \ref{case_4}(c), system has less total control cost with optimal droop, which verifies the optimality of optimal coefficients under Control Objective I.

As for Control Objective II, in Fig. \ref{case_4}(e), the LCC-HVDCs almost provide the equal power support to the AC main system However, in Fig. \ref{case_4}(d), the LCC-HVDC connected to larger-primary-frequency-regulation-coefficient adjacent system provides more power support, which meets the Control Objective II. In Fig. \ref{case_4}(f), the control costs remain zero before the fault occurs, and the post-fault control cost with optimal droop coefficients is smaller than that with average droop, which also shows the optimality.

\subsection{Discussion on System Stability}

In Section V, we obtain the sufficient condition for the asymptotic stability of the closed-loop equilibrium, i.e., $k_i^D>0, i\in \mathcal{N}_D$. To further discuss the system stability, five groups of droop coefficients between 1 p.u. to 100 p.u. are selected and set to all LCC-HVDC systems. The AC main system frequencies are shown in Fig. \ref{case_5}.
\begin{figure}[htb]
	\centering
	\includegraphics[width=0.48\textwidth]{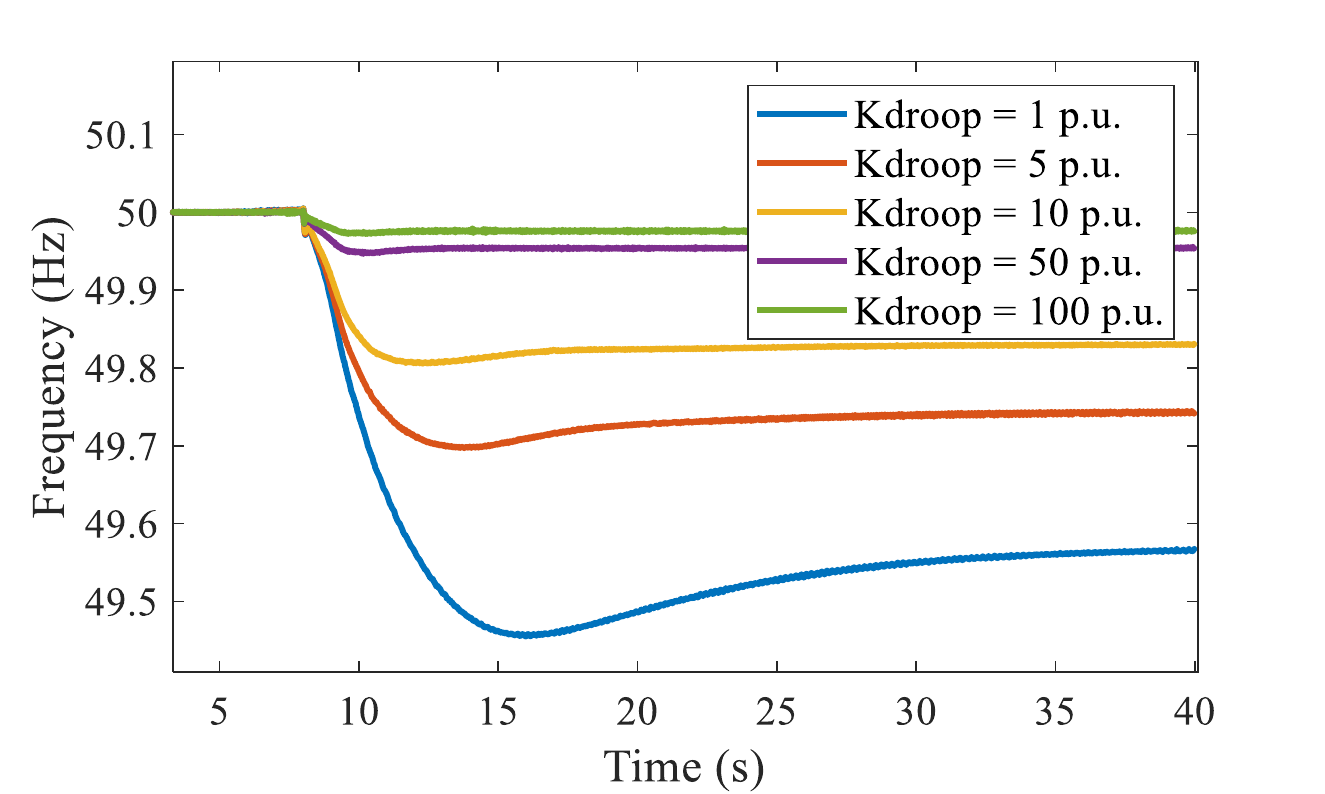}
	\caption{Frequency of AC main system with different droop coefficients}
	\label{case_5}
\end{figure}

As shown in Fig. \ref{case_5}, with different droop coefficients between 1 p.u. to 100 p.u., the system equilibrium can remain stable according to the frequencies. As the droop coefficients increase, the system frequency can be stabilized at a higher level and the transient response time is shorter. Nevertheless, the attractive domain of the equilibrium is not derived in this paper, which is related to the amount of power imbalance, and the system could be frequency instable if there are severer emergency faults occurring. 

\section{Conclusion}

In this paper, a decentralized coordinated-droop-based emergency frequency control strategy is proposed to deal with the emergency frequency instability in MIDC systems. The introduced P-f droop characteristic of LCC-HVDC system enables this control strategy. With the designed coordinated droop mechanism and the dead zone setting, the LCC-HVDC droop controllers work only in case of emergency situations. Benefit from the decentralized control approach, the proposed control strategy is free from controllers' communication and can respond quickly. Then, in order to reasonably allocate the power imbalance among LCC-HVDCs and generators, the optimal droop coefficients are determined by formulating the OEFC problem, which is applicable to various control objectives. Moreover, the locally asymptotic stability of the closed-loop equilibrium is proved by the Lyapunov approach. In the case study on the CloudPSS platform, the effectiveness of the proposed control strategy and the optimality of the selected optimal droop coefficients are verified, and the system stability is illustrated.

%
%




\bibliographystyle{IEEEtran}
\bibliography{mybib}






%
%
%
%
%
%
%

\end{document}